\title{Cup Product Persistence and Its Efficient Computation} %TODO Please add
\titlerunning{Cup Product Persistence and Its Efficient Computation} %TODO optional, please use if title is longer than one line
\author{Tamal K. Dey }{Department of Computer Science, Purdue University, USA \and }{tamaldey@purdue.edu}{}{}%TODO mandatory, please use full name; only 1 author per \author macro; first two parameters are mandatory, other parameters can be empty. Please provide at least the name of the affiliation and the country. The full address is optional. Use additional curly braces to indicate the correct name splitting when the last name consists of multiple name parts.
\author{Abhishek Rathod}{Department of Computer Science, Ben Gurion University, Israel}{arathod@post.bgu.ac.il}{}{}
\authorrunning{T.K. Dey and A. Rathod} %TODO mandatory. First: Use abbreviated first/middle names. Second (only in severe cases): Use first author plus 'et al.'
\keywords{Persistent cohomology, cup product, image persistence, persistent cup module} %TODO mandatory; please add comma-separated list of keywords
\definecolor{MyRed}{HTML}{BF616A} 
\definecolor{MyBlue}{HTML}{5E81AC}
\definecolor{qqttcc}{rgb}{0,0.2,0.8}
\definecolor{ttttff}{rgb}{0.2,0.2,1}
\definecolor{wrwrwr}{rgb}{0.3803921568627451,0.3803921568627451,0.3803921568627451}
\definecolor{rvwvcq}{rgb}{0.08235294117647059,0.396078431372549,0.7529411764705882}
\newcommand{\cancel}[1]
\DeclareMathOperator{\Rips}{VR}
\DeclareMathOperator{\ripsfull}{Rips}
\DeclareMathOperator{\inter}{I}
\DeclareMathOperator{\diam}{diam}
\DeclareMathOperator{\Cech}{\check{\mathsf{C}}}
\DeclareMathOperator{\cechfull}{\check{C}{ech}}
\tikzset{tail reversed/.code={\pgfsetarrowsstart{tikzcd to}}}
\newcommand{\cochainmap}{\varphi}
\newcommand{\cohommap}{\varphi^{\ast}}
\newcommand{\cohommaptwo}{\psi^{\ast}}
\theoremstyle{remark}
\newtheorem{fact}{Fact}
\newcommand{\cupmap}{\smile_{\bullet}}
\newcommand{\cupmapi}{\smile_{i}}
\newcommand{\rpn}{\mathbb{RP}^n}
\newcommand{\cpn}{\mathbb{CP}^n}
\DeclareMathOperator*{\DP}{DP}
\DeclareMathOperator*{\op}{op}
\newcommand{\cupprod}{\mathbf{cuplength}}
\newcommand{\Int}{\mathbf{Int}}
\newcommand{\complex}{\mathsf{K}}
\newcommand{\altcomplex}{\mathsf{L}}
\newcommand{\altaltcomplex}{\mathsf{C}}
\newcommand{\rk}{\mathsf{rk}}
\newcommand{\Dgm}{\mathsf{Dgm}}
\newcommand{\Chain}{\mathsf{C}}
\newcommand{\Homol}{\mathsf{H}}
\newcommand{\Bdry}{\mathsf{B}}
\newcommand{\Cycl}{\mathsf{Z}}
\newcommand{\PCB}{\Omega}
\newcommand{\NBB}{\mathbb{N}}
\newcommand{\barcode}{B(\mathcal{F})}
\newcommand{\matrixh}{\mathbf{H}}
\newcommand{\matrixb}{\mathbf{S}}
\newcommand{\basisB}{S}
\newcommand{\matrixr}{\mathbf{R}}
\newcommand{\matrixc}{\mathbf{C}}
\newcommand{\productmodulek}{\im \mathsf{H}^\ast(\smile_\bullet)}
\newcommand{\productrelmodulek}{\im \rel  \mathsf{H}^\ast(\smile_\bullet)}
\newcommand{\productmodulekk}{\im \mathsf{H}^\ast(\cupmap^{k})}
\newcommand{\ordmodule}{\mathsf{H}^\ast(\complex_\bullet)}
\DeclareMathOperator{\im}{im}
\DeclareMathOperator{\rel}{rel}
\newcommand{\Simp}{\mathbf{Simp}}
\begin{document}

%\title{Cup Product persistence: algorithm and stability}
%\author{Tamal Dey \\   \texttt{tamaldey@purdue.edu} \and Abhishek Rathod  \\   \texttt{arathod@purdue.edu}}

\maketitle

\begin{abstract} 
It is well-known that the cohomology ring has a richer structure than homology groups.
%~\cite{mosher2008cohomology,massey1998higher}. 
However, until recently, the use of cohomology in persistence setting has been limited to speeding up of barcode computations.
%~\cite{bauer2014clear,bauer2021ripser,boissonnat2013compressed}. 
Some of the recently introduced invariants, namely, persistent cup-length~\cite{contessoto}, persistent cup modules~\cite{cuparxivtwo,memoli2} and persistent Steenrod modules~\cite{lupo2018persistence}, to some extent, fill this gap. 
When added to the standard persistence barcode, they lead to invariants that are more discriminative than the standard persistence barcode.       
In this work, we 
 devise an $O(d n^4)$ algorithm for computing the persistent $k$-cup modules for all $k \in \{2, \dots, d\}$, where $d$ denotes the dimension of the filtered complex, and $n$ denotes its size. 
Moreover, we note that since the persistent cup length can be obtained as a byproduct of our computations, this leads to a  
faster algorithm for computing it for $d>3$.  
Finally, we introduce a new stable invariant called partition modules of cup product that is more discriminative than persistent $k$-cup modules and devise an $O(c(d)n^4)$  algorithm for computing it, where $c(d)$ is subexponential in $d$.
\end{abstract}

\section{Introduction}

Persistent homology is one of the principal tools in the fast growing field of topological data analysis. A solid algebraic framework \cite{zomorodian2004computing}, a well-established theory of stability~\cite{CEH07,chazal2014persistence} along with fast algorithms and software~\cite{bauer2021ripser,bauer2014clear,boissonnat2013compressed,phat,gudhi} to compute complete invariants called barcodes of filtrations have led to the successful adoption of single parameter persistent homology as a data analysis tool~\cite{DW22,EH10}.
This standard persistence framework operates in each (co)homology degree separately
and thus cannot capture the interactions across degrees in an apparent way. 
To achieve this, one may endow a cohomology
vector space with the well-known \emph{cup product} 
 forming a graded algebra. 
 Then, the isomorphism type of such graded algebras can reveal information
including interactions across degrees. 
However, even the best known algorithms for determining isomorphism of graded algebras run in exponential time in the worst case ~\cite{brooksbank2019testing}. So it is not immediately clear how one may extract new (persistent) invariants from the product structure efficiently in practice.

Cohomology has already shown to be useful in speeding up persistence computations before\cite{bauer2014clear,bauer2021ripser,boissonnat2013compressed}. It has also been noted that additional structures on cohomology provide an avenue
to extract rich topological information\cite{yarmola2010persistence,lupo2018persistence,herscovich2018higher,belchi2021a,contessoto}.
To this end, in a recent study, the authors of \cite{contessoto}
introduced the notion of 
(the persistent version of) an invariant called the \emph{cup length}, which is the maximum number of cocycles with a nonzero product. 
In another version \cite{cuparxivtwo}, the authors of \cite{contessoto} introduced an invariant called \emph{barcodes of persistent $k$-cup modules} which are  stable, and can add more discriminating ability (Figure~\ref{fig:prodpers}).
Computing this invariant allows us to capture interactions among various degrees. 
In \Cref{ex:simp,ex:proj}, we provide simple examples for which persistent cup modules can disambiguate filtered spaces where ordinary persistence and persistent cup-length fail. Notice 
that for a filtered $d$-complex, the  $k$-cup modules for $k\in \{2,\dots,d\}$ may not be a strictly finer invariant on its own
compared to ordinary persistence. It can however add more information as~\Cref{ex:simp,ex:proj} illustrate.

\begin{example} \label{ex:simp}
    See Figure~\ref{fig:prodpers}.
    Let $\complex^1$ be a cell complex obtained by taking a wedge of four circles and two $2$-spheres. 
    Let $\complex^2$ be a cell complex obtained by taking a wedge of two circles, a sphere and a $2$-torus.
    Let $\complex^3$ be a cell complex obtained by taking a wedge of two tori. 
 \begin{remark}
    Throughout, for a cell complex $\altaltcomplex$, the filtration for which all the $k$-dimensional cells of $\altaltcomplex$ arrive at the same index is referred to as the \emph{natural cell filtration associated to $\altaltcomplex$}. 
\end{remark}   
    
    Consider the natural cell filtrations $\complex^1_\bullet$, $\complex^2_\bullet$ and $\complex^3_\bullet$.
  Standard persistence cannot tell apart $\complex^1_\bullet$, $\complex^2_\bullet$  and $\complex^3_\bullet$ as the  barcode for the three filtrations are the same. Persistent cup length cannot distinguish $\complex^2_\bullet$ from $\complex^3_\bullet$, whereas the barcodes for persistent cup modules for  $\complex^1_\bullet$, $\complex^2_\bullet$ and $\complex^3_\bullet$ are all different. 
\end{example}

\begin{example} \label{ex:proj}

We now consider another example that compares the relative discriminative strengths of standard persistence, persistent cup length and persistent cup modules.

\smallskip

\textbf{Filtered real projective space.}
The real projective space $\rpn$ is the space of lines through the origin in $\mathbb{R}^{n+1}$. It is homeomorphic to the quotient space ${S^n} / {(u \simeq -u)}$  obtained by identifying the antipodal points of a sphere, which in turn is homeomorphic to  ${D^{n}} / {(v \simeq -v)}$ for $v\in \partial D^{n}$. 
Since ${S^{n-1}} / {(u \simeq -u)} \cong \mathbb{RP}^{n-1}$,  $\rpn$ can be obtained from $\mathbb{RP}^{n-1}$ by attaching a cell $D^n$ using the projection $\wp_n: {S^{n-1}} \to \mathbb{RP}^{n-1}$.
Thus, $\rpn$ is a CW complex with one cell in every dimension from $0$ to $n$. 
This gives rise to the natural cell  filtration $\rpn_{\bullet}$ for $\rpn$, where cells of successively higher dimension  are introduced with  attaching maps $\wp_i$ for $i\in [n]$ described above. 
Finally, the cohomology algebra of $\rpn$ is given by $\mathbb{Z}_2 [x] / (x^{n+1})$, where $x \in \mathsf{H}^1(\rpn)$~\cite[pg.~146]{hausmann2014mod}.

\smallskip

\textbf{Filtered complex projective space.}
The complex projective space $\cpn$ is the space of complex lines through the origin in $\mathbb{C}^{n+1}$. It is homeomorphic to the quotient space $S^{2n+1} / S^1 \cong S^{2n+1} / (u \simeq \lambda_q u)$, which in turn can be shown to be homeomorphic to $D^{2n} / (v \simeq \lambda_q v)$ for $v \in \partial D^{2n}$ for all $\lambda_q \in \mathbb{C}$, $|\lambda_q| = 1$. Therefore, $\cpn$ is obtained from $\mathbb{CP}^{n-1}$ by attaching a $2n$-dimensional cell $D^{2n}$ using the projection $\wp'_{2n}: S^{2n-1} \to \mathbb{CP}^{n-1}$. Thus, $\cpn$ is a CW complex with one cell in every even dimension from $0$ to $2n$. This yields the natural cell filtration $\cpn_{\bullet}$ for $\cpn$ where a cell of dimension $2i$ is added to the CW complex for $i\in[n]$ with the attaching maps $\wp'_{2i}$ for $i\in[n]$ described above. The cohomology algebra of $\cpn$ is given by $\mathbb{Z}_2 [y] / (y^{n+1})$, where $y \in \mathsf{H}^2(\cpn)$~\cite[pg.~241]{hausmann2014mod}.

\smallskip

\textbf{Filtered wedge of spheres.}
Let $\altcomplex^{n} = S^{1} \vee \dots \vee S^{n}$ be a wedge of spheres of increasing dimensions. Let $p$ be the basepoint of $\altcomplex^n$. The filtration $\altcomplex^n_{\bullet}$ can be  described as follows: \\
$\altcomplex^{n}_0 = p$, and for $i\in \{1,\dots,n\}$,
$\altcomplex^{n}_i = S^1 \vee \dots  \vee S^i$,
where for each index $i$, a cell of dimension $i$ is added with the attaching map that takes the boundary of the $i$-cell to the basepoint $p$. The cohomology algebra of $\altcomplex^n$ is trivial in the sense that $x \smile y = 0$ for all $x, y \in \mathsf{H}^{\ast}(\altcomplex)$.

\smallskip

Standard persistence cannot distinguish $\altcomplex^n_{\bullet}$ from $\rpn_{\bullet}$ since they have the same standard persistence barcode. Persistent cup length for $\rpn_{\bullet}$ and  $\cpn_{\bullet}$ for all intervals $[i,j]$ with $n \geq i \geq 1$ is equal to $i$, and hence persistent cup length cannot disambiguate these filtrations.

\smallskip

Finally, persistent cup modules can tell apart $\altcomplex^n_{\bullet}$,  $\rpn_{\bullet}$ and $\cpn_\bullet$ as their cup module barcodes are different. This follows from the fact that the degrees of the generator of the cohomology algebras of $\rpn_{\bullet}$ and $\cpn_\bullet$ are different.
    
\end{example}

In Section~\ref{sec:alg} and \ref{sec:alg-kproduct}, we show how to compute the persistent $k$-cup modules for all $k \in \{2, \dots, d\}$ in $O(d n^4)$ time, where $d$ denotes the dimension of the filtered complex, and $n$ denotes its size.
Moreover, since the persistent cup length of a filtration can be obtained as a byproduct of cup modules computation~\cite{contessoto}, we get an efficient algorithm to compute this invariant as well. 
Our approach for computing barcodes of persistent $k$-cup modules involves computing the image persistence of the cup product viewed as a map from the tensor product of the cohomology vector space to the cohomology vector space itself. This approach requires careful bookkeeping of restrictions of  cocycles as one processes the simplices in the reverse filtration order. Algorithms for computing image persistence have been studied earlier by Cohen-Steiner et al.~\cite{cohenimage} and recently by Bauer and Schmahl~\cite{bauerschmahlsocg}. However, the algorithms in \cite{cohenimage,bauerschmahlsocg} work only for monomorphisms of filtrations making them inapplicable to our setting. 

In \Cref{sec:partition}, we introduce a new invariant called the partition modules of the cup product which is more discriminative than the  $k$-cup modules. We observe that this invariant is stable for $\ripsfull$ and $\cechfull$ filtrations (\Cref{sec:stab}), and we devise an algorithm that computes all the partition modules in $O(c(d)n^4)$ where $c(d)$ is subexponential in $d$ as shown in \Cref{sec:algpart}.

\section{Background and preliminaries}
Througout, we use $n$ to denote the size of the filtered complex $\complex$, $[n]$ to denote the set $\{1,2,\dots,n\}$ and $I$ to denote the set $\{0,1,2,\dots,n\}$. 

\subsection{Persistent cohomology} \label{sec:perscoh}

In this paper, we work with mod-2 cohomology.We refer the reader to \cite{MR1867354,hausmann2014mod} for topological preliminaries.
 Let $P$ denote a  poset category such as $\mathbb{N}$, $\mathbb{Z}$, or $\mathbb{R}$, and $\Simp$ denote the category of simplicial complexes.
 \begin{figure}[htbp]
\centerline{\includegraphics{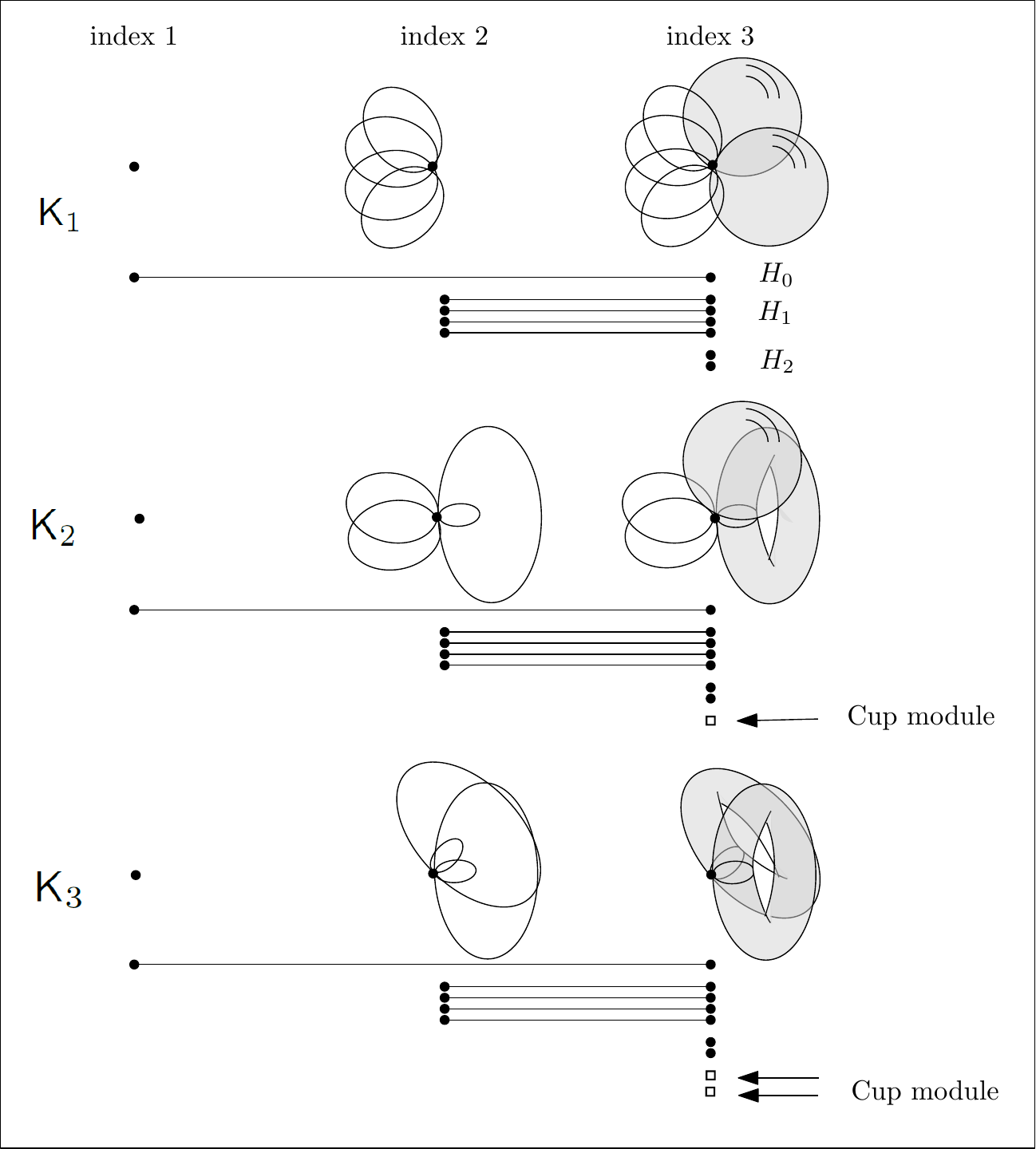}}
\caption{Example 1 Persistent cup modules distinguishes all three cellular filtrations.}
\label{fig:prodpers}
\end{figure}
A \emph{$P$-indexed filtration} is a functor $\mathcal{F}: P \to \Simp$ such that $\mathcal{F}_s \subseteq \mathcal{F}_t$ whenever $s\leq t$.
A $P$-indexed \emph{persistence module} $V_{\bullet}$  is a functor from a poset category $P$ to the category of (graded) vector spaces. The morphisms $\psi_{s,t}: V_s \to V_t$ for $s \leq t$ are referred to as \emph{structure maps}.
We assume it to be of \emph{finite type}, that is, $V_{\bullet}$ is
pointwise finite dimensional and all morphisms $ \psi_{s,t}$ for $s\leq t$ are isomorphisms outside a finite subset of $P$. A $P$-indexed module $W$ is a \emph{submodule} of $V$ if $W_{s}\subset V_{s}$ for all $s\in P$ and the structure maps  $W_s \to W_t$ are restrictions of $\psi_{s,t}$ to $W_s$.

A persistence module $V_{\bullet}$ defined on a 
totally ordered set such as $\mathbb{N}$, $\mathbb{Z}$, or $\mathbb{R}$ decomposes uniquely up to isomorphism
into simple modules called \emph{interval modules} whose structure maps are identity
and the vector spaces have dimension one. The support of these interval modules collectively constitute
what is called the barcode of $V_{\bullet}$ and denoted by $B(V_{\bullet})$.
 
 When we have a filtration $\mathcal{F}$ on $P$ where the complexes change
only at a finite set of values $a_1<a_2<\ldots<a_n$,  we can reindex the filtration with integers, and refine it so that only one simplex is added at every index. Reindexing and refining in this manner one can obtain a simplex-wise filtration of the final simplicial complex $\complex$
defined on an indexing set with integers.
For the remainder of the paper, we assume that the original filtration on $P$ is simplex-wise to begin with. This only simplifies our presentation, and we do not lose generality. With
this assumption, we obtain a filtration indexed on
$I$ after writing $\complex_{a_i}=\complex_i$,
\[ \complex_{\bullet}: \emptyset=\complex_0 \hookrightarrow  \complex_1  \hookrightarrow 
 \dots \hookrightarrow  \complex_n = \complex. \]

Applying the functor $\mathsf{C}^{\ast}$, we obtain a persistence module $\mathsf{C}^{\ast}(\complex_\bullet)$ of cochain complexes  whose structure maps are cochain maps defined by restrictions induced by inclusions:
\[\mathsf{C}^{\ast}(\complex_\bullet): \mathsf{C}^{\ast}(\complex_n) \to  \mathsf{C}^{\ast}(\complex_{n-1})\to \dots \to
\mathsf{C}^{\ast}(\complex_{0}),
\]
and applying the functor $\mathsf{H}^{\ast}$, we get a persistence module $\mathsf{H}^{\ast}(\complex_\bullet)$ of 
 graded cohomology vector spaces whose structure maps are linear maps induced by the above-mentioned restrictions:
\[\mathsf{H}^{\ast}(\complex_\bullet): \mathsf{H}^{\ast}(\complex_n) \to  \mathsf{H}^{\ast}(\complex_{n-1}) \to\dots \to
\mathsf{H}^{\ast}(\complex_{0}).
\]

For simplifying the description of the algorithm, we work with $I^{\op}$-indexed modules $\mathsf{H}^{\ast}(\complex_\bullet)$ and $\mathsf{C}^{\ast}(\complex_\bullet)$. 
    The barcode $B(M)$ (see section~\ref{sec:barcode}) of a finite-type $P^{\op}$-module $M$ can be obtained from the barcode $B(N)$ of its associated $I^{\op}$-module $N$ by writing the interval $(j,i] \in B(N)$ for $j<i<n$ as $[a_{j+1}, a_{i+1}) \in B(M)$, 
    and the interval $(j,n] \in B(N)$ as  $[a_{j+1}, \infty) \in B(M)$. In this convention, we refer to $i$ (or $n$) as a \emph{birth index}, $j$ as a \emph{death index}, and intervals of the form $(j,n]$ as \emph{essential bars}.

\begin{definition}[Restriction of cocycles]
   For a filtration $\complex_{\bullet}$, if $\zeta$ is a cocycle in complex $\complex_{b}$, but ceases to
be a cocycle at $\complex_{b+1}$, then $\zeta^{i}$ is defined as $\zeta^{i}=\zeta\cap \mathsf{C}^{\ast}(\complex_{i})$
for $i\leq b$, and in this case, we say that $\zeta^{i}$ is the restriction of
$\zeta$ to index $i$. For $i>b$ , $\zeta^{i}$ is set to the
zero cocycle. 
\end{definition}

\begin{definition}[Persistent cohomology basis]
    Let $\Omega_\complex =\left\{ \zeta_{\mathbf{i}}\mid\mathbf{i}\in B(\mathsf{H}^{\ast}(\complex_{\bullet}))\right\} $
be a set of cocycles,
where for every $\mathbf{i}=(d_{i},b_{i}]$, $\zeta_{\mathbf{i}}$
is a cocycle in $\complex_{b_{i}}$ but no more a cocycle in $\complex_{b_{i}+1}$. If for every index $j \in [n]$, the
cocycle classes $\left\{ [\zeta_{\mathbf{i}}^{j}]\mid\zeta_{\mathbf{i}}\in\Omega_\complex \right\} $
form a basis for $\mathsf{H}^{\ast}(\complex_{j})$, then we say that $\Omega_\complex $ is a
persistent cohomology basis for $\complex_{\bullet}$, and the cocycle
$\zeta_{\mathbf{i}}$ is called a representative cocycle for the interval
$\mathbf{i}$. If $b_i = n$, $[\zeta_{\mathbf{i}}]$ is called an \emph{essential class}.
\end{definition}

\subsection{Simplicial cup product}

Simplicial cup products connect cohomology groups across degrees.
Let $\prec$ be an arbitrary but fixed total order on the vertex set  of $\complex$.
Let $\xi$ and $\zeta$ be cocycles of degrees $p$ and $q$ respectively.
The cup product of $\xi$ and $\zeta$ is the $(p+q)$-cocycle $\xi \smile  \zeta $ whose
evaluation on any $(p+q)$-simplex $\sigma = \{v_0,\dots,v_{p+q}\}$ is given by
\begin{equation} \label{eq:simpcup}
    (\xi \smile \zeta)(\sigma)=\xi(\left\{ {v_{0},...,v_{p}}\right\}) \cdot \zeta(\left\{ {v_{p},\dots,v_{p+q}}\right\}).
\end{equation}

This defines a map $\smile : \mathsf{C}^p(\complex) \times \mathsf{C}^q(\complex) \to \mathsf{C}^{p+q}(\complex)$, which assembles to give a map 
$\smile : \mathsf{C}^\ast(\complex) \times \mathsf{C}^\ast(\complex) \to \mathsf{C}^{\ast}(\complex)$ for the cochain complex $\mathsf{C}^\ast(\complex)$.
Using the fact that $\delta(\zeta\smile \xi)=\delta\xi \smile \zeta+\xi\smile \delta\zeta$, it follows that $\smile $ induces a map $\smile : \mathsf{H}^\ast(\complex) \times \mathsf{H}^\ast(\complex) \to \mathsf{H}^{\ast}(\complex)$.
It can be shown that the map $\smile$ is independent of the ordering $\prec$.

Using the universal property for tensor products and linearity, the bilinear maps for   
 \[\smile:\Chain^p(\complex)\times\Chain^q(\complex)\to\Chain^{p+q}(\complex) \text{ \,\, assemble to give a linear map \,\,}
     \smile:\Chain^*(\complex)\otimes\Chain^*(\complex)\to\Chain^{*}(\complex).
\] 
and the bilinear maps for   
 \[\smile:\Homol^p(\complex)\times\Homol^q(\complex)\to\Homol^{p+q}(\complex) \text{ \,\, assemble to give a linear map \,\,}
     \smile:\Homol^*(\complex)\otimes\Homol^*(\complex)\to\Homol^{*}(\complex).
\]

Finally, we state two well-known facts about cup products that are used throughout.

\begin{theorem}[Commutativity~\cite{hausmann2014mod}]
    %The cup product with $\mathbb{Z}_2$ coefficients is commutative, that is, 
    $[\xi] \smile [\zeta] = [\zeta] \smile [\xi]$ for all $[\xi],[\zeta] \in \Homol^{\ast}(\complex)$.
\end{theorem}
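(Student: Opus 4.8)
The plan is to show that, since we work with mod-$2$ coefficients, the sum $\xi \smile \zeta + \zeta \smile \xi$ is a coboundary at the cochain level whenever $\xi, \zeta$ are cocycles; passing to cohomology classes then yields $[\xi]\smile[\zeta] = [\zeta]\smile[\xi]$. The subtlety to keep in mind is that the Alexander--Whitney formula \eqref{eq:simpcup} defining $\smile$ on cochains is \emph{not} commutative, as it depends on the fixed total order $\prec$, so commutativity can only be expected after quotienting by coboundaries.

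First I would introduce the simplicial cup-one product $\smile_1 : \Chain^p(\complex) \times \Chain^q(\complex) \to \Chain^{p+q-1}(\complex)$, defined on an ordered simplex $\sigma = \{v_0 \prec \cdots \prec v_{p+q-1}\}$ by a sum over the interleavings of a front face evaluated by $\xi$ with a back face evaluated by $\zeta$, in the style of Steenrod's cup-$i$ constructions. This operation serves as the explicit cochain-level homotopy that witnesses the failure of commutativity of $\smile$.

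The key step is to verify the coboundary identity
\[
\delta(\xi \smile_1 \zeta) = \delta\xi \smile_1 \zeta + \xi \smile_1 \delta\zeta + \xi \smile \zeta + \zeta \smile \xi,
\]
where all signs are suppressed because we are over $\mathbb{Z}_2$. Specializing to cocycles $\xi, \zeta$, so that $\delta\xi = \delta\zeta = 0$, this collapses to $\delta(\xi \smile_1 \zeta) = \xi \smile \zeta + \zeta \smile \xi$, which exhibits $\xi \smile \zeta + \zeta \smile \xi$ as a coboundary and closes the argument upon passing to classes.

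I expect the main obstacle to be precisely the verification of this coboundary identity: expanding $\delta$, $\smile$, and $\smile_1$ on a test simplex produces a large collection of face terms, and one must track that everything cancels in pairs except for the four surviving terms. This is purely combinatorial bookkeeping of front and back faces, and it is where all the genuine work concentrates. An alternative, more conceptual route would sidestep this computation via acyclic models: the cup product is induced by the diagonal approximation $\Delta$ (Alexander--Whitney), whereas $\zeta \smile \xi$ is induced by $T \circ \Delta$ with $T$ the transposition of tensor factors; since both are natural augmentation-preserving chain maps $\Chain_\ast(\complex) \to \Chain_\ast(\complex) \otimes \Chain_\ast(\complex)$ agreeing in degree $0$, the method of acyclic models supplies a natural chain homotopy between them, and dualizing yields commutativity. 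In that approach the obstacle merely shifts to establishing acyclicity of the tensor-product functor on the models via Eilenberg--Zilber, after which the conclusion is automatic.
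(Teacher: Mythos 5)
Your proposal is correct, but note that the paper does not prove this statement at all: it is quoted as a known fact with a citation to Hausmann's book, so there is no in-paper argument to compare against. Your cup-one route is the classical Steenrod proof, and the key identity you isolate,
\[
\delta(\xi \smile_1 \zeta) \;=\; \delta\xi \smile_1 \zeta \;+\; \xi \smile_1 \delta\zeta \;+\; \xi \smile \zeta \;+\; \zeta \smile \xi \pmod{2},
\]
is exactly the right statement; once it is verified by the combinatorial bookkeeping you describe, the conclusion for cocycles is immediate, and the argument also correctly locates why commutativity fails at the cochain level (the formula in \eqref{eq:simpcup} depends on the vertex order $\prec$) yet holds on $\Homol^{\ast}(\complex)$. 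Two small points are worth flagging. First, the mod-$2$ setting is doing real work: over $\mathbb{Z}$ the same homotopy yields only graded commutativity $[\xi]\smile[\zeta] = (-1)^{pq}[\zeta]\smile[\xi]$, so your sign suppression is a genuine simplification available here and not mere convenience; this matters for the paper, since all of its constructions (product cocycles, the matrices $\matrixb$, $\matrixh$) are over $\mathbb{Z}_2$, where the unsigned statement is the one actually used. Second, your alternative acyclic-models argument is also sound and is closer in spirit to how textbooks (including the cited one) often organize the proof: it trades the face-by-face cancellation for the Eilenberg--Zilber machinery, buying naturality in $\complex$ for free --- which is mildly relevant to this paper, since functoriality of $\smile$ under the restriction maps $\cochainmap_i$ is invoked repeatedly (e.g., in Proposition~\ref{prop:mainstruct}). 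Either route fully discharges the claim.
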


\begin{theorem}[Functoriality of the cup product~\cite{hausmann2014mod}]
    Let $f: \complex \to \altcomplex$ be a simplicial map and let $f^{\ast}: \Homol^{\ast}(\altcomplex) \to \Homol^{\ast}(\complex)$ be the induced map on cohomology. Then, $f^{\ast}([\xi] \smile [\zeta]) = f^{\ast}([\xi]) \smile f^{\ast}([\zeta])$ for all $[\xi], [\zeta] \in \Homol^{\ast}(\complex)$.
\end{theorem}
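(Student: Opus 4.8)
The plan is to establish the identity first at the cochain level and then pass to cohomology. The map $f^\ast$ is induced by the pullback cochain map $f^\#\colon \mathsf{C}^\ast(\altcomplex) \to \mathsf{C}^\ast(\complex)$, defined on a simplex $\sigma$ of $\complex$ by $(f^\#\alpha)(\sigma) = \alpha(f(\sigma))$, where a cochain is extended by zero on degenerate tuples (tuples with a repeated vertex). The crux is to prove the cochain-level statement $f^\#(\alpha \smile \beta) = f^\#\alpha \smile f^\#\beta$; everything else follows because $f^\#$ is a cochain map and the cup product on cohomology is, by definition, the class of the cup product of representative cocycles.

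The main obstacle is that the simplicial cup product in \eqref{eq:simpcup} is defined relative to a fixed total order on vertices, so the pullback respects the front-face/back-face splitting only if $f$ carries ordered simplices to ordered simplices. First I would fix an arbitrary total order $\prec_\altcomplex$ on the vertices of $\altcomplex$, and then choose a total order $\prec_\complex$ on the vertices of $\complex$ refining the preorder pulled back along $f$: order the vertices of $\complex$ primarily by the $\prec_\altcomplex$-position of their images and break ties arbitrarily, so that $v \prec_\complex w$ implies $f(v) \preceq_\altcomplex f(w)$. Since the excerpt already records that the induced cup product on cohomology is independent of the chosen vertex order, I am free to compute each side with these compatible orders without affecting the cohomological conclusion.

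With order-preserving $f$ in hand, I would evaluate both sides on an ordered simplex $\sigma = \{v_0 \prec_\complex \cdots \prec_\complex v_{p+q}\}$ and split into two cases. If $f$ is injective on the vertices of $\sigma$, then $f(\sigma)$ is a genuine ordered $(p+q)$-simplex whose front $p$-face and back $q$-face are exactly the images of the front and back faces of $\sigma$; expanding $(\alpha\smile\beta)(f(\sigma))$ via \eqref{eq:simpcup} and regrouping then yields $(f^\#\alpha \smile f^\#\beta)(\sigma)$ directly. If $f$ collapses some pair of vertices, then because $f$ is order-preserving any coincidence forces an adjacent coincidence $f(v_i)=f(v_{i+1})$; this repetition lies in the front face (when $i<p$) or in the back face (when $i\ge p$), so the corresponding pulled-back factor vanishes and the right-hand side is zero, while the left-hand side is zero as well since $f(\sigma)$ is degenerate. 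Hence both sides agree pointwise.

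Finally I would pass to cohomology: by the Leibniz identity $\delta(\alpha\smile\beta)=\delta\alpha\smile\beta+\alpha\smile\delta\beta$ recalled in the excerpt, $f^\#$ sends cocycles to cocycles and coboundaries to coboundaries, hence induces $f^\ast$; applying $f^\ast$ to the class $[\xi]\smile[\zeta]=[\xi\smile\zeta]$ and invoking the cochain identity gives $f^\ast([\xi]\smile[\zeta]) = [f^\#\xi \smile f^\#\zeta] = f^\ast[\xi]\smile f^\ast[\zeta]$. I expect the only genuine difficulty to be the bookkeeping of vertex orders and degenerate images; once $f$ is made order-preserving, the computation reduces to a one-line expansion of the defining formula.
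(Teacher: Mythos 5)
Your proof is correct, and there is no in-paper argument to compare it against: the paper states functoriality as a known fact, citing \cite{hausmann2014mod}, and never proves it. Your route is the standard textbook one, and you correctly identify and resolve the one genuine subtlety, namely that the Alexander--Whitney formula \eqref{eq:simpcup} depends on a vertex order that an arbitrary simplicial map $f$ need not respect. Pulling back $\prec_\altcomplex$ along $f$ and breaking ties arbitrarily makes $f$ weakly monotone, and your two-case check is sound: injectivity on $\sigma$ plus weak monotonicity gives strict monotonicity, so front and back faces map to front and back faces; and a coincidence under a weakly monotone map does force an adjacent coincidence $f(v_i)=f(v_{i+1})$, which lands inside the front face when $i<p$ and the back face when $i\geq p$, killing the right-hand side exactly when $f_\#\sigma=0$ kills the left. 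Working mod $2$, as the paper does, spares you all sign bookkeeping, which is also why the degenerate bookkeeping is the only real work. Two points are worth making explicit when writing this up. First, your cochain identity $f^\#(\alpha\smile\beta)=f^\#\alpha\smile f^\#\beta$ holds only for the compatible pair of orders, so the passage to the theorem genuinely needs the paper's remark that the cohomology-level product is independent of $\prec$ (together with the observation that $f^\ast$ itself involves no order), and you invoke this correctly; the logic is not circular, but a reader should see that the order-independence is used at the level of $\Homol^\ast$, not of cochains. Second, the theorem as printed in the paper has a typo --- the quantification should read $[\xi],[\zeta]\in\Homol^{\ast}(\altcomplex)$, not $\Homol^{\ast}(\complex)$ --- and your proof proves the corrected statement, as it should.
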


\subsection{Image persistence}
%A filtered cochain complex $\mathsf{C}_\bullet$ is a nested sequence of cochain complexes $\emptyset = \mathsf{C}_0  \subset \mathsf{C}_1 \subset \mathsf{C}_2 \subset \dots \mathsf{C}_N$.

 The category of persistence modules
is abelian since the indexing category $P$ is small and the category of vector spaces is abelian. Thus, kernels,
cokernels, and direct sums are well-defined.
 Persistence modules obtained as images, kernels and cokernels of morphisms were first studied  in \cite{cohenimage}. In this section, we provide a brief overview of image persistence modules.

 Let $\mathsf{C}_{\bullet}$ and $\mathsf{D}_{\bullet}$ be two persistence modules of cochain complexes: 
 \[
 \mathsf{C}^{\ast}_n \xrightarrow{\varphi_{n}}  \mathsf{C}^{\ast}_{n-1} \xrightarrow{\varphi_{n-1}} \dots \xrightarrow{\varphi_{1}}
\mathsf{C}^{\ast}_{0}
\quad\quad
 and 
\quad\quad
  \mathsf{D}^{\ast}_n \xrightarrow{\psi_{n}}  \mathsf{D}^{\ast}_{n-1} \xrightarrow{\psi_{n-1}} \dots \xrightarrow{\psi_{1}}
\mathsf{D}^{\ast}_0,
 \]
such that for $0\leq i\leq n$ the graded vector spaces $\mathsf{C}^{*}_i$ and $\mathsf{D}^{*}_i$ (along with the respective coboundary maps)  are cochain complexes, and  the structure maps $\{\varphi_i:\mathsf{C}^{*}_i \to \mathsf{C}^{*}_{i-1} \mid i \in [n] \}$ and $\{\psi_i:\mathsf{D}^{*}_i \to \mathsf{D}^{*}_{i-1} \mid i \in [n] \}$ are cochain maps.
Let $G_{\bullet}: \mathsf{C}_{\bullet}  \to \mathsf{D}_{\bullet} $ be a  \emph{morphism of persistence modules of cochain complexes}, that is,  there exists a set of cochain maps $G_i: \mathsf{C}^{*}_{i} \to \mathsf{D}^{*}_{i}$  $\forall i \in \{0,\dots,n\}$, and the following diagram commutes for every $i\in [n]$.
\begin{figure}[H]
 \[\begin{tikzcd}
	{\mathsf{C}^{*}_i} &&& {\mathsf{D}^{*}_{i}} \\
	\\
	{\mathsf{C}^{*}_{i-1}} &&& {\mathsf{D}^{*}_{i-1}}
	\arrow["{G_i}", from=1-1, to=1-4]
	\arrow["{\varphi_i}"', from=1-1, to=3-1]
	\arrow["{G_{i-1}}"', from=3-1, to=3-4]
	\arrow["{\psi_i}", from=1-4, to=3-4]
\end{tikzcd}\]
\end{figure}

 Applying the cohomology functor  $\Homol^{*}$  to the morphism  $G_{\bullet}\colon \mathsf{C}_{\bullet}\to \mathsf{D}_{\bullet}$ induces another morphism of persistence modules, namely, $\Homol^{*}(G_{\bullet}) \colon \Homol^{*}(\mathsf{C}_{\bullet}) \to \Homol^{*}(\mathsf{D}_{\bullet})$. Moreover, the image $\im \Homol^{*}(G_{\bullet})$ is a  persistence module.
 Like any other single-parameter persistence module, an image persistence module decomposes uniquely into intervals called its \emph{barcode}~\cite{zomorodian2004computing}.

 As noted in \cite{bauerschmahlsocg}, a natural  strategy for computing the image of $\Homol^{*}(G_{\bullet})$ is to  write it as
\[
\im \Homol^{*}(G_{\bullet})\cong \frac{G_{\bullet}(\Cycl^{*}(\mathsf{C}_{\bullet}))}{G_{\bullet}(\Cycl^{*}(\mathsf{C}_{\bullet}))\cap \Bdry^{*}(\mathsf{D}_{\bullet})},
\]
where the $i$-th terms for the numerator and the denominator are given  respectively by
$(G_{\bullet}(\Cycl^{*}(\mathsf{C}_{\bullet})))_i=G_{i}(\Cycl^{*}(\mathsf{C}_{i})) \mbox{ and }
(G_{\bullet}(\Cycl^{*}(\mathsf{C}_{\bullet}))\cap \Bdry^{*}(\mathsf{D}_{\bullet}))_{i} = G_{i}(\Cycl^{*}(\mathsf{C}_{i}))\cap \Bdry^{*}(\mathsf{D}_{i}).
$

\paragraph*{Tensor product image persistence.}
Consider the following map given by cup products
 \begin{equation} \label{eqn:tensormaptwo}
     \smile_{\bullet}:\Chain^*(\complex_{\bullet})\otimes\Chain^*(\complex_{\bullet})\to\Chain^{*}(\complex_{\bullet}).
 \end{equation}
Taking $G_{\bullet}=\smile_{\bullet}$ in the definition of
image persistence, we get
a persistence module, denoted by $\im \Homol^{*}(\smile \complex_{\bullet})$, which is the same as the persistent cup module introduced in~\cite{cuparxivtwo}. Whenever the underlying filtered complex is clear from the context, we  use the shorthand notation   $\im \Homol^{*}(\cupmap)$ instead of $\im \Homol^{*}(\smile \complex_{\bullet})$. Our aim is to compute its barcode denoted by
 $B(\im \Homol^{*}(\cupmap))$.

\subsection{Barcodes}
\label{sec:barcode}
.

Let $\mathsf{K}_\bullet$ denote a  filtration on the index set
$I = \{0,1,\ldots, n\}$. Assume that $\mathsf{K}_\bullet$ is simplex-wise, that is,
$\complex_i\setminus \complex_{i-1}$ is a single simplex. 
Consider the persistence module $\mathsf{H}^\ast_{\bullet}$ obtained by applying the cohomology functor $\mathsf{H}^\ast$ on the filtration $\mathsf{K}_\bullet$, that is,
$\mathsf{H}^\ast_i=\mathsf{H}^\ast(\mathsf{K}_i)$. The structure maps $\{\cohommap_i:\mathsf{H}^\ast(\mathsf{K}_i) \to \mathsf{H}^\ast(\mathsf{K}_{i-1}) \mid i \in [n] \}$
for this module are induced
by the cochain maps $\{\cochainmap_i:\mathsf{C}^{*}(\mathsf{K}_i) \to \mathsf{C}^{*}(\mathsf{K}_{i-1}) \mid i \in [n] \}$. Since $\mathsf{K}_\bullet$ is
simplex-wise, each linear map $\cohommap_i$ is either injective with
a cokernel of dimension one, or surjective with a kernel of dimension one, but
not both. Such a persistence module $\mathsf{H}^\ast_{\bullet}$ decomposes into interval modules supported on a unique set of intervals, namely the barcode of  $\mathsf{H}^\ast_{\bullet}$ written as $B(\mathsf{H}^\ast_{\bullet})=\{(d_i,b_i]\,|\, b_i\geq d_i, b_i,d_i\in I\}$. 
Notice that since $I$ is the indexing poset  of $\complex_\bullet$,  $I^{\op}$  is the indexing poset of  $\mathsf{H}^\ast_\bullet$.
For $r>s$, we 
define $\cohommap_{r,s}= \cohommap_{s+1}\circ\cdots\circ\cohommap_{r-1}\circ\cohommap_r$ and $\cochainmap_{r,s}=\cochainmap_{s+1}\circ\cdots\circ \cochainmap_{r-1}\circ\cochainmap_r$.

\begin{remark}\label{rem:defmap}
Since $\productmodulek$ is a submodule of $\ordmodule$, the
 structure maps  of $\productmodulek$ for every $i \in I$, namely, $\im \Homol^{*}(\smile_{i}) \to \im \Homol^{*}(\smile_{i-1})$  are given by restrictions of $\cohommap_i$ to $\im \Homol^{*}(\smile_{i})$.
\end{remark}

\begin{definition}
    For any $i\in \{0,\dots,n\}$, a nontrivial cocycle  $\zeta \in \mathsf{Z}^{\ast}(\complex_i)$  is said to be a \emph{product cocycle of $\complex_i$} if $[\zeta] \in\im \Homol^{*}(\smile_{i})$.
\end{definition}

\begin{proposition} \label{prop:mainstruct}
    For a filtration $\complex_\bullet$, the birth indices of $B(\productmodulek)$ are a subset of the birth indices of $B(\ordmodule)$, and the death indices of $B(\productmodulek)$ are a subset of the death indices of $B(\ordmodule)$. 
\end{proposition}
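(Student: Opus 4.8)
The plan is to use the observation from \Cref{rem:defmap} that $\productmodulek$ is a submodule of $\ordmodule$ whose structure maps are the restrictions of the maps $\cohommap_i$, together with the fact recalled in \Cref{sec:barcode} that, since $\complex_\bullet$ is simplex-wise, each $\cohommap_i$ is either injective or surjective and exactly one of the two holds. Under this dichotomy, an index $i-1$ is a birth index of $\ordmodule$ precisely when $\cohommap_i$ is injective and a death index precisely when $\cohommap_i$ is surjective; the analogous characterization holds for $\productmodulek$ in terms of its restricted structure maps. I would therefore reduce both containments to statements about injectivity and surjectivity of structure maps and prove them by contraposition: show that every non-death index of $\ordmodule$ is a non-death index of $\productmodulek$, and that every non-birth index of $\ordmodule$ is a non-birth index of $\productmodulek$. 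The boundary index $n$ is handled trivially, since $n$ is a birth index of $\productmodulek$ only when $\im\Homol^\ast(\smile_n)\neq 0$, which forces $\mathsf{H}^\ast(\complex_n)\neq 0$ and hence makes $n$ a birth index of $\ordmodule$ as well.

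The containment of death indices is the easy half and in fact holds for any submodule. If $d$ is not a death index of $\ordmodule$, then $\cohommap_{d+1}$ is injective; the corresponding structure map of $\productmodulek$ is the restriction of $\cohommap_{d+1}$ to the subspace $\im\Homol^\ast(\smile_{d+1})\subseteq\mathsf{H}^\ast(\complex_{d+1})$, and the restriction of an injective linear map to a subspace is injective. Hence this structure map has trivial kernel, so $d$ is not a death index of $\productmodulek$, which yields the desired containment of death indices.

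The containment of birth indices is where the argument genuinely relies on the product structure, and I expect this to be the main obstacle. The naive submodule reasoning fails here because the restriction of a surjection to a subspace need not be surjective; indeed, one can exhibit abstract submodules of $\ordmodule$ that violate the birth containment, so functoriality of the cup product is essential. Concretely, let $b<n$ be a non-birth index of $\ordmodule$, so that $\cohommap_{b+1}$ is surjective, and consider the restricted structure map $\im\Homol^\ast(\smile_{b+1})\to\im\Homol^\ast(\smile_b)$ of $\productmodulek$. Every element of $\im\Homol^\ast(\smile_b)$ is a sum of products $[\alpha]\smile[\beta]$ with $[\alpha],[\beta]\in\mathsf{H}^\ast(\complex_b)$; using surjectivity of $\cohommap_{b+1}$ I would lift each $[\alpha]$ and $[\beta]$ to classes $[\alpha'],[\beta']\in\mathsf{H}^\ast(\complex_{b+1})$, form their product $[\alpha']\smile[\beta']\in\im\Homol^\ast(\smile_{b+1})$, and apply functoriality of the cup product to conclude that $\cohommap_{b+1}([\alpha']\smile[\beta'])=[\alpha]\smile[\beta]$. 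Summing over the terms shows the restricted structure map is surjective, so $b$ is not a birth index of $\productmodulek$. This establishes the containment of birth indices and completes the argument.
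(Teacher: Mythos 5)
Your proposal is correct, and while the death-index half coincides with the paper's argument, the birth-index half takes a genuinely different route. For deaths, the paper extends a basis of $\im \Homol^{*}(\smile_{j+1})$ to a basis of $\Homol^{\ast}(\complex_{j+1})$ and notes that its image stays nontrivial and independent when $j$ is not a death index of $\ordmodule$; that is exactly your observation that the restriction of an injective structure map to a submodule is injective. For births, the paper fixes a persistent cohomology basis $\{\xi_i\}$, uses the fact that $\im \Homol^{*}(\smile_{\ell})$ is generated by the classes $[\xi_i^\ell]\smile[\xi_j^\ell]$, and argues contrapositively via functoriality that a product nontrivial at some $r<b_i$ is already nontrivial at the factor's birth index $b_i$, so product classes can only be born at birth indices of $\ordmodule$. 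You instead avoid representatives entirely: at a non-birth index $b$ the map $\cohommap_{b+1}$ is surjective, so you lift the factors of each product $[\alpha]\smile[\beta]$ separately (the maps are graded, so lifts live in the right degrees) and apply functoriality to conclude that the restricted structure map of $\productmodulek$ is surjective, hence no birth at $b$. Your version buys a few things: it does not rely on the paper's implicitly invoked claim that products of restricted basis representatives generate the cup space; it isolates exactly why the submodule property alone is insufficient (your remark that abstract submodules can violate birth containment is correct and worth making explicit, since the death containment \emph{does} hold for arbitrary submodules); and it transfers verbatim to the $k$-cup and partition modules. What the paper's generator-based formulation buys in exchange is alignment with the algorithm: the generators $[\xi_i^\ell]\smile[\xi_j^\ell]$ are precisely the candidate products that {\sc CupPers} tests at each birth index, so the proof doubles as a correctness ingredient for Theorem~\ref{thm:maincorrect}. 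One small caution: your opening dichotomy (birth iff $\cohommap_i$ injective, death iff surjective) needs the simplex-wise hypothesis, but your two containment arguments actually only use the general characterizations ``birth iff the incoming structure map is not surjective'' and ``death iff it is not injective,'' so nothing in your proof is at risk there.
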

\begin{proof}
   Let $(d_i,b_i]$ and $(d_j,b_j]$ be (not necessarily distinct) intervals in $B(\ordmodule)$, where $b_j \geq b_i$. Let $\xi_i$ and $\xi_j$ be  representatives for $(d_i,b_i]$ and $(d_j,b_j]$ respectively.
    If $\xi_i \smile \xi_j^{b_i}$ is trivial, then by the functoriality of cup product, $\cochainmap_{b_i,r}(\xi_i \smile \xi_j^{b_i}) = \cochainmap_{b_i,r}(\xi_i) \smile \cochainmap_{b_i,r}(\xi_j^{b^i}) = \xi_i^{r} \smile  \xi_j^{r}$ is trivial  $\forall r < b_i$. Writing contrapositively, if $\exists r< b_i$ for which $\xi_i^{r} \smile \xi_j^{r} $ is nontrivial, then $\xi_i \smile \xi_j^{b_i}$ is nontrivial.
    Noting that $\im \Homol^{*}(\smile_{\ell})$ for any $\ell \in \{0,\dots,n\}$ is generated by $\{[\xi_i^\ell] \smile [\xi_j^
    \ell] \mid \xi_i, \xi_j 
\in \Omega_\complex \}$, it follows that an index $b$ is the birth index of a bar in $B(\productmodulek)$ only if it is  the birth index of a bar in $B(\ordmodule)$, proving the first claim. 

Let $\Omega'_{j+1} = \{ [\tau_1],\dots, [\tau_k]\}$ be a basis for $\im \Homol^{*}(\smile_{j+1})$. Then, $\Omega'_{j+1}$ extends to a basis $\Omega_{j+1}$ of $\Homol^{\ast}(\complex_{j+1})$.
If $j$ is not a death index of $B(\ordmodule)$, then $\cochainmap_{j+1}(\tau_1), \dots, \cochainmap_{j+1}(\tau_k)$ are all nontrivial and linearly independent. From \Cref{rem:defmap}, it follows that $j$ is not a death index of $B(\productmodulek)$, proving the second claim. 
 \end{proof}

\begin{corollary} \label{cor:onedie}
For a filtration $\complex_\bullet$, if $d$ is a death index of $B(\productmodulek)$, then at most one bar of $B(\productmodulek)$ has death index $d$.
\end{corollary}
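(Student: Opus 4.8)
The plan is to fix an index $d$ and count the bars of $B(\productmodulek)$ whose death index equals $d$, reducing this count to the dimension of a single kernel. Throughout I would use that $\productmodulek$ is a submodule of $\ordmodule$ (\Cref{rem:defmap}), so its structure map across the transition from index $d+1$ to index $d$ is exactly the restriction of $\cohommap_{d+1}$ to the term $\im \Homol^{\ast}(\smile_{d+1})$.

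First I would record an elementary counting identity for any finite-type $I^{\op}$-indexed module $M$ with structure maps $\mu_i \colon M_i \to M_{i-1}$: the number of bars of $B(M)$ of the form $(d,b]$ equals $\dim \ker \mu_{d+1}$. This follows from the interval decomposition of $M$. In the half-open $I^{\op}$ convention, an interval summand $(d',b]$ is supported exactly on indices $d' < j \le b$; the restriction of $\mu_{d+1}$ to it is the identity when $d' \le d-1$ and $b \ge d+1$, and is the zero map out of a one-dimensional space precisely when $d' = d$ and $b \ge d+1$, i.e.\ precisely for the summands with death index $d$. Summing over summands shows that $\ker \mu_{d+1}$ is the direct sum of the one-dimensional pieces contributed at index $d+1$ by the bars dying at $d$, which gives the identity. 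The bookkeeping here---matching \emph{dies at $d$} with the kernel of the map \emph{into} index $d$ rather than out of it---is the one place that needs care, because of the reversed ($\op$) indexing; I expect this to be the main (though modest) obstacle, as the rest is a one-line dimension bound.

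Applying the identity with $M = \productmodulek$, the number of bars of $B(\productmodulek)$ with death index $d$ equals $\dim \ker\!\left(\cohommap_{d+1}\big|_{\im \Homol^{\ast}(\smile_{d+1})}\right) = \dim\!\left(\ker \cohommap_{d+1} \cap \im \Homol^{\ast}(\smile_{d+1})\right) \le \dim \ker \cohommap_{d+1}$. Finally, because $\complex_\bullet$ is simplex-wise, each $\cohommap_{d+1}$ is either injective or surjective with a one-dimensional kernel, so $\dim \ker \cohommap_{d+1} \le 1$. Hence at most one bar of $B(\productmodulek)$ has death index $d$, which is the claim. Note this argument does not even require $d$ to be a death index of $B(\ordmodule)$, though that containment (\Cref{prop:mainstruct}) is consistent with it.
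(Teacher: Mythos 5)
Your proposal is correct and is essentially the paper's argument: the paper's rank fact (a linear map of corank one restricts to any subspace with rank deficit at most one) is precisely the rank--nullity dual of your bound $\dim\ker\bigl(\cohommap_{d+1}\big|_{\im \Homol^{\ast}(\smile_{d+1})}\bigr) \leq \dim\ker\cohommap_{d+1} \leq 1$, combined with \Cref{rem:defmap} exactly as you use it. Your only additions are making explicit the interval-decomposition bookkeeping that identifies bars dying at $d$ with $\ker\mu_{d+1}$ (which the paper leaves implicit) and the observation that $d$ need not be assumed a death index of $B(\ordmodule)$.
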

\begin{proof}
Using the fact that if the rank of a linear map $f: V_1 \to V_2$ is $\dim V_1 -1$, then the rank of $f |_{W_1}$ for a subspace $W_1 \subset V_1$ is at least $\dim W_1 -1$, from \Cref{rem:defmap} it follows that
if $\dim \Homol^{*} (\complex_{d}) = \dim \Homol^{*} (\complex_{d+1})  - 1$, then
 \[\dim (\im \Homol^{*}(\smile_{d})) + 1 \geq \dim (\im \Homol^{*}(\smile_{d+1})) \geq \dim (\im \Homol^{*}(\smile_{d}))\quad\text{  proving the claim.} \qedhere \]
\end{proof}

   \begin{remark}
  The persistent cup module is a submodule of the original persistence module. 
  Let $ \dim (\im \Homol^{p}_i)$ denote $\dim (\im \Homol^{p}(\smile_{i}))$.
 In the barcode $B(\productmodulek)$, if $\complex_i = \complex_{i-1} \cup \{\sigma^p\}$, then either (i) $\dim (\im \Homol^{p}_i) > \dim (\im \Homol^{p}_{i-1})$,  or (ii) $\dim (\im \Homol^{p-1}_i) < \dim (\im \Homol^{p-1}_{i-1})$, or (iii) there is no change: $\dim (\im \Homol^{p}_i) = \dim (\im \Homol^{p}_{i-1})  $ and  $\dim (\im \Homol^{p-1}_i) = \dim (\im \Homol^{p-1}_{i-1})$.
 The decrease (increase) in persistent cup modules happens only if there is a decrease (increase) in ordinary cohomology. 
 Multiple bars of $B(\productmodulek)$ may have the same birth index. 
 But, if $i$ is a death index, then Corollary~\ref{cor:onedie} says that
 it is so for at most one bar in $B(\productmodulek)$.
 \end{remark}

\section{Algorithm: barcode of persistent cup module} 
\label{sec:alg}
Our goal is to compute the barcode of 
$\productmodulek$, which being an image module is a submodule
of  $\mathsf{H}^\ast(\mathsf{K}_\bullet)$. The vector space $\im \mathsf{H}^\ast(\cupmapi)$ is a subspace of the cohomology vector space 
$\mathsf{H}^\ast(\mathsf{K}_i)$. Let us call this subspace  
the \emph{cup space} of $\mathsf{H}^\ast(\mathsf{K}_i)$. Our algorithm
keeps track of a basis of this cup space as it processes
the filtration in the reverse order. This backward processing
is needed because the structure maps  
 between the cup spaces 
are induced by
restrictions  $\varphi_{j,i} \colon \mathsf{C}^{*}(\complex_j) \to \mathsf{C}^{*}(\complex_i)$ that are, in turn, induced by inclusions $\mathsf{K}_j\supseteq \mathsf{K}_i$, $i\leq j$.
In particular, a cocycle/coboundary in $\complex_j$ is taken to its restriction in $\complex_i$ for $i\leq j$. 
Our algorithm keeps track of the birth and death of the cocycle classes in the cup spaces as it proceeds through the restrictions in the reverse filtration order. 
We maintain a basis of nontrivial product cocycles in a matrix $\matrixb$ whose classes $S$ form a basis for the cup spaces. In particular, cocycles in $\matrixb$ are born and die with birth and death of the elements in cup spaces.

A cocycle class from $\mathsf{H}^\ast(\mathsf{K}_i)$ may
enter the cup space $\im \mathsf{H}^\ast(\cupmapi)$ signalling a birth or may leave (become zero) the cohomology vector space and hence the cup space signalling a death. Interestingly,
multiple births may happen, meaning that multiple independent 
cocycle classes %from the cohomology vector space 
may enter the cup space, whereas at most a single class can die because of \Cref{cor:onedie}.
To determine which class from the cohomology vector space 
enters the cup space and which one leaves it, we make use of the barcode of  $\mathsf{H}^\ast(\mathsf{K}_\bullet)$. 
However, the classes of 
the bases maintained in $\matrixh$ do not directly provide bases for the cup spaces. Hence, we need to compute and maintain $\matrixb$ separately, of course, with the
help of $\matrixh$.

Let us consider the case of birth first. Suppose that a cocycle $\xi$ at degree
$p$ is born at index $k = b_i$ for $\Homol^\ast(\complex_\bullet)$. With $\xi$, a set of product cocycles are born in some of the degrees $p+q$ for $q\geq 1$. 
To detect them, we first compute a set of candidate 
cocycles by taking the cup product of cocycles $\xi \smile \zeta$, for all  cocycles $\zeta \in \matrixh$ at $b_i$ 
which can potentially augment the  basis maintained in $\matrixb$. The ones among the
candidate cocycles whose classes are independent
w.r.t. the current  basis maintained in $\matrixb$ are determined to be born at $b_i$. 
Next, consider the case of death.
A product cocycle $\zeta$ in degree $r$ ceases to exist if it becomes linearly dependent of other product cocycles. This can happen only if the dimension of $\Homol^r(\complex_\bullet)$ itself has reduced under the structure map going from $k+1$ to $k$.

It suffices to check if any of the nontrivial cocycles in $\matrixb$ have become linearly dependent or trivial
after applying  restrictions. In what follows, we use $\deg(\zeta)$ to denote the degree of a cocycle $\zeta$.\\

\noindent
{\bf Algorithm} {\sc CupPers} ($\mathsf{K}_\bullet$)\label{algorithm:cuppers}
\begin{itemize}
    \item Step 1. Compute barcode $\barcode =\{(d_i,b_i]\}$ of $\Homol^\ast(\complex_\bullet)$
    with representative cocycles $\xi_i$; Let $\matrixh=\{\xi_i  \mid [\xi_i] \text{ essential and } \deg(\xi_i)>0 \}$; Initialize $\matrixb$  with the coboundary matrix $\partial^{\perp}$ obtained by taking transpose of the boundary matrix $\partial$;
    \item Step 2. For $k:=n$ to $1$ do
    \begin{itemize}
        \item Restrict the cocycles in $\matrixb$ and $\matrixh$ to index $k$; %by restricting them to $\mathsf{K}_k$;
        \item Step 2.1 For every $i$ with  $k=b_i$ ($k$ is a birth-index)  and $\deg(\xi_i)>0 $  
        \begin{itemize}
            \item Step 2.1.1 If $k\not = n$, update $\matrixh:=[\matrixh \mid \xi_i]$
            \item Step 2.1.2 For every $\xi_j \in \matrixh$ %, $\xi_j\not = \xi_i$,
            \begin{enumerate}
                \item [i.] If $(\zeta\leftarrow \xi_i\smile \xi_j)\not = 0$ and $\zeta$ is
                independent in $\matrixb$, then $\matrixb:=[\matrixb \mid \zeta]$ with column $\zeta$ annotated
                as $\zeta\cdot {\rm birth}:=k$ and  $\zeta\cdot {\rm rep@birth}:= \zeta$
            \end{enumerate}
        \end{itemize}
        \item Step 2.2 If $k=d_i$ ($k$ is a death-index) for some $i$ and $\deg(\xi_i)>0 $ then 
        \begin{itemize}
         \item Step 2.2.1 Reduce $\matrixb$ with left-to-right column additions 
         \item Step 2.2.2 If a nontrivial cocycle $\zeta$ is zeroed out, remove $\zeta$ from $\matrixb$, generate the
         bar-representative pair $\{(k,\zeta\cdot {\rm birth}],\zeta\cdot {\rm rep@birth}\}$
         \item Step 2.2.3 Update $\matrixh$ by removing the column $\xi_i$
        \end{itemize}
    \end{itemize}
\end{itemize}
Algorithm {\sc CupPers} describes this algorithm with a pseudocode. First, in Step 1, we
compute the barcode of the cohomology persistence module $\mathsf{H}^\ast(\mathsf{K}_\bullet)$
along with a persistent cohomology basis.
This can be achieved in $O(n^3)$ time using either the annotation
algorithm~\cite{boissonnat2013compressed,DW22} or the pCoH algorithm~\cite{de2011dualities}.
The basis $H$ is maintained with the matrix $\matrixh$  whose columns are cocycles represented as the support
vectors on simplices. The matrix $\matrixh$ is initialized with all cocycles $\xi_i$ that
are computed as representatives of the bars $(d_i,b_i]$ for the module $\mathsf{H}^\ast(\mathsf{K}_\bullet)$ which get born at the first (w.r.t. reverse order)
complex $\mathsf{K}_n=\mathsf{K}$. The matrix $\matrixb$ is initialized with the coboundary matrix $\partial^{\perp}$ with standard cochain basis. Subsequently, nontrivial cocycle vectors are added to $\matrixb$. The classes of the nontrivial cocycles in matrix $\matrixb$ form a basis $\basisB$ for the cup space at any point in the course of the algorithm.

In Step 2, we process cocycles in the reverse filtration order. At each index $k$, we do
the following. If $k$ is a birth index for a bar $(-,b_i]$ (Step 2.1), that is, $k=b_i$ for
a bar with representative $\xi_i$ in the barcode of
$\mathsf{H}^\ast(\mathsf{K}_\bullet)$, first we augment $\matrixh$ with $\xi_i$ to keep it current
as a basis for the vector space $\mathsf{H}^\ast(\mathsf{K}_k)$ (Step 2.2.1). Now,
a new bar for
the persistent cup module can potentially
be born at $k$. To determine this, we take the cup product of $\xi_i$ with all
cocycles in $\matrixh$ and check if the cup product cocycle is non-trivial and is
independent of the cocycles in $\matrixb$. If so, a product cocycle
is born at $k$ that is added to $\matrixb$ (Step 2.1.2). 
To check this independence, we need $\matrixb$ to have current coboundary basis
along with current nontrivial product cocycle basis $S$ that are both updated with restrictions. Note that we need a for loop in Step 2.1 because at $k=n$, there can be multiple births in $\mathsf{H}^\ast(\mathsf{K}_\bullet)$. 

\begin{remark}
    Restrictions in $\matrixh$ and $\matrixb$ are implemented by zeroing out the corresponding row associated to the simplex $\sigma_i$ when
we go from $\mathsf{K}_i$ to $\mathsf{K}_{i-1}$ and $\mathsf{K}_i\setminus \mathsf{K}_{i-1}=\{\sigma_i\}$.
\end{remark}

If $k$ is a death index (Step 2.2), potentially the class of a product cocycle from $\matrixb$
can be a linear combination of the classes of other product cocycles after $\matrixb$ has been
updated with restriction. We reduce $\matrixb$ with left-to-right column additions and
detect the column that is zeroed out (Step 2.2.1). If the column $\zeta$ is zeroed out,
the class $[\zeta]$ dies at $k$ and we generate a bar with death index $k$ and birth index
equal to the index when $\zeta$ was born (Step 2.2.2). Finally, we update $\matrixh$ by removing the
column for $\xi_i$ (Step 2.2.3).

\smallskip

\subsection{Rank functions and barcodes}
Let $P\subseteq \mathbb{Z}$ be a finite set with induced poset structure
from $\mathbb{Z}$. Let $\Int(P)$ denote the set of all intervals in $P$.
Recall that $P^{\op}$ denotes the opposite poset category.
Given a $P^{\op}$-indexed persistence module $V_\bullet$, the rank function 
$\rk_{V_\bullet}: \Int(P)\rightarrow \mathbb{Z}$ assigns to each interval
$I=[a,b]\in \Int(P)$ the rank of the linear map $V_b\rightarrow V_a$.
It is well known that (see~\cite{CEH07,EH10}) the barcode of $V_\bullet$ viewed as a function 
$\Dgm_{V_\bullet}: \Int(P)\rightarrow \mathbb{Z}$ can be obtained from the rank function
by the inclusion-exclusion formula:
\begin{equation}
\Dgm_{V_\bullet}([a,b])= \rk_{V_\bullet}[a,b]-\rk_{V_\bullet}[a-1,b] + \rk_{V_\bullet}[a,b+1]-\rk_{V_\bullet}[a-1,b+1]
\label{eq:mobius}
\end{equation}

    To prove the correctness of {Algorithm} {\sc CupPers}, we use the following elementary fact.
\begin{fact} \label{fa:rankfact}
A class that is born at an index $\geq b$ dies at $a$
 iff $\rk_{V_\bullet}([a,b]) < \rk_{V_\bullet}([a+1,b])$.
\end{fact}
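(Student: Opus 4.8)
The plan is to reduce everything to the standard dictionary between rank functions and barcodes and then read off the claim by a one-line difference of counts. Recall that for a $P^{\op}$-indexed module the rank $\rk_{V_\bullet}([a,b])$ of the structure map $V_b\to V_a$ (with $a\le b$) equals the number of bars of $B(V_\bullet)$ whose support contains both $a$ and $b$. In the half-open convention $(d_i,b_i]$ the support of a bar is the integer interval $\{d_i+1,\dots,b_i\}$, so a bar is alive at an index $j$ exactly when $d_i<j\le b_i$. Hence the first step is to record the two closed-form counts
\[
\rk_{V_\bullet}([a,b]) = \#\{\, i : d_i \le a-1,\ b_i \ge b \,\}, \qquad \rk_{V_\bullet}([a+1,b]) = \#\{\, i : d_i \le a,\ b_i \ge b \,\},
\]
obtained by intersecting the ``alive at $a$'' (resp. ``alive at $a+1$'') condition with the ``alive at $b$'' condition and simplifying using $a\le b$.

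The second step is purely combinatorial. The index set defining $\rk_{V_\bullet}([a,b])$ is contained in the one defining $\rk_{V_\bullet}([a+1,b])$, since $d_i\le a-1$ implies $d_i\le a$. Subtracting, the two counts differ exactly by those indices $i$ with $d_i=a$ and $b_i\ge b$, that is,
\[
\rk_{V_\bullet}([a+1,b]) - \rk_{V_\bullet}([a,b]) = \#\{\, i : d_i = a,\ b_i \ge b \,\} \ \ge\ 0 ,
\]
where the nonnegativity is a consistency check matching the factorization $V_b\to V_{a+1}\to V_a$. Finally I would translate this count back into the language of births and deaths: a bar $(d_i,b_i]$ with $d_i=a$ and $b_i\ge b$ is precisely a class that dies at $a$ and is born at an index $b_i\ge b$. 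Thus such a class exists if and only if the displayed difference is strictly positive, which is exactly the inequality $\rk_{V_\bullet}([a,b])<\rk_{V_\bullet}([a+1,b])$, establishing the equivalence.

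I do not anticipate a genuine obstacle here; the only thing to watch carefully is the bookkeeping forced by the opposite-poset indexing together with the half-open bar convention, so that ``dies at $a$'' is correctly identified with death index $d_i=a$ (support bottom $a+1$) rather than with $d_i=a-1$. As an alternative route I could sum the inclusion--exclusion identity~\eqref{eq:mobius} over all $b'\ge b$ to express $\#\{\, i:d_i=a,\ b_i\ge b\,\}$ as a telescoping combination of rank values; this reproves the same identity but adds no clarity, so I would keep the direct counting argument above.
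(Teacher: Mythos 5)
Your proof is correct. The paper states this Fact without proof (calling it elementary), and your argument is exactly the standard justification it implicitly relies on: by the interval decomposition, $\rk_{V_\bullet}([a,b])$ counts the bars $(d_i,b_i]$ with $d_i\le a-1$ and $b_i\ge b$, the difference $\rk_{V_\bullet}([a+1,b])-\rk_{V_\bullet}([a,b])$ counts bars with $d_i=a$ and $b_i\ge b$, and your identification of ``dies at $a$'' with death index $d_i=a$ is the correct reading of the paper's half-open, $I^{\op}$-indexed convention.
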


\subsection{Correctness of {\bf Algorithm} {\sc CupPers}}

\begin{theorem} \label{thm:maincorrect}
Algorithm {\sc CupPers}  computes the barcode of  the persistent cup module.
\end{theorem}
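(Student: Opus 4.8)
The plan is to prove correctness by establishing, for each index $k$ processed in reverse filtration order, that the algorithm correctly identifies all births and the (at most one) death at $k$ of the persistent cup module, and that the bar-representative pairs it emits match the barcode $B(\productmodulek)$. The backbone of the argument is the invariant that, just after the restriction step at index $k$, the classes $S$ of the nontrivial product cocycles stored in $\matrixb$ form a basis for the cup space $\im \mathsf{H}^\ast(\cupmap_k)$, while $\matrixh$ stores cocycles whose classes form a basis for $\mathsf{H}^\ast(\complex_k)$. I would prove this invariant by downward induction on $k$, using \Cref{prop:mainstruct} to restrict attention to the birth and death indices of $B(\ordmodule)$, since by that proposition no birth or death of $B(\productmodulek)$ can occur at any other index.

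For the \emph{birth} step, I would argue that at a birth index $k=b_i$ of $B(\ordmodule)$, the span of $\{[\xi_a^k]\smile[\xi_b^k]\mid \xi_a,\xi_b\in\matrixh\}$ equals $\im \mathsf{H}^\ast(\cupmap_k)$ (this is exactly the generating statement invoked in the proof of \Cref{prop:mainstruct}). Since $\matrixh$ already contains a basis for $\mathsf{H}^\ast(\complex_k)$ after adjoining $\xi_i$, and since all products \emph{not} involving the newly born $\xi_i$ were already accounted for at indices $>k$, it suffices to test the new products $\xi_i\smile\xi_j$ for $\xi_j\in\matrixh$; adjoining those whose classes are independent of the current basis $S$ in $\matrixb$ correctly augments a basis of the cup space. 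Here I would invoke \textbf{Fact}~\ref{fa:rankfact} together with the inclusion-exclusion/Möbius formula \eqref{eq:mobius} to certify that each independent product cocycle genuinely corresponds to a new bar of $B(\productmodulek)$ born at $k$, with the recorded birth index and representative.

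For the \emph{death} step at $k=d_i$, I would use \Cref{cor:onedie} to conclude that at most one class of the cup space dies, so at most one column of $\matrixb$ can become a linear combination of the others after restriction. The left-to-right reduction of $\matrixb$ detects precisely such a dependency: a nontrivial product cocycle $\zeta$ is zeroed out exactly when its restricted class becomes dependent on the others, which by the submodule structure map description of \Cref{rem:defmap} is exactly the event that $[\zeta]$ leaves the cup space. I would then verify that the emitted bar $(k,\zeta\cdot\mathrm{birth}]$ with representative $\zeta\cdot\mathrm{rep@birth}$ agrees with $B(\productmodulek)$, again appealing to \textbf{Fact}~\ref{fa:rankfact} to match the rank drop of the restricted structure map with the death of the recorded bar.

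The main obstacle, I expect, is the birth step: showing that testing \emph{only} the products involving the single newly born generator $\xi_i$, rather than recomputing all pairwise products from scratch, suffices to maintain a correct basis of the cup space as $k$ decreases. This requires a careful bookkeeping argument that the restrictions commute with cup products (functoriality of $\smile$, already invoked in \Cref{prop:mainstruct}) so that product cocycles retained from larger indices remain valid representatives after restriction, and that no product cocycle can be born at $k$ without involving a class born at $k$. Closing this gap rigorously — in particular handling the interaction between coboundaries stored in $\matrixb$ (initialized as $\partial^\perp$) and genuine product cocycles, so that independence tests in $\matrixb$ correctly reflect independence of cohomology classes in the cup space — is where the real work lies. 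The remaining pieces reduce to routine linear-algebra and the rank-to-barcode translation via \eqref{eq:mobius}.
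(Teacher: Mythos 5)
Your overall architecture matches the paper's: restrict attention to the birth and death indices of $B(\ordmodule)$ via \Cref{prop:mainstruct}, invoke \Cref{cor:onedie} for the single death, maintain bases in $\matrixh$ and $\matrixb$, and use functoriality of $\smile$ to argue that at a birth index only products involving the newly born class need testing. However, there is a genuine gap, and you have located it in the wrong place. The birth-step concern you flag as ``where the real work lies'' is comparatively routine: since restrictions are cochain maps commuting with $\smile$, the cup space at a birth index $b$ decomposes as $\basisB_b = \basisB_{b+1}\oplus C_b$, where $C_b$ is generated by products $[\xi_i]\smile[\xi_j]$ in which $\xi_i$ is born exactly at $b$; products among older classes were already tested at larger indices, and the coboundary columns kept (and restricted) at the left of $\matrixb$ make the independence test in $\matrixb$ equivalent to independence in cohomology. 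The paper disposes of all this in a few lines. What your proposal treats as routine --- ``verify that the emitted bar $(k,\zeta\cdot\mathrm{birth}]$ agrees with $B(\productmodulek)$, again appealing to Fact~\ref{fa:rankfact}'' --- is the actual crux. Fact~\ref{fa:rankfact} only certifies that \emph{some} bar with birth index $\geq b$ dies when the rank drops; it does not tell you that the correct birth index is the one annotated on the particular column the reduction zeroes out. When several live product classes become dependent under restriction, the death is a priori chargeable to any of several recorded births, and charging the wrong one yields a wrong barcode even though every pointwise dimension of the cup spaces is correct --- so your basis invariant alone cannot close the argument.

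The paper closes exactly this gap with machinery your proposal never introduces: the columns of $\matrixb$ are kept ordered by decreasing birth index, so a left-to-right reduction can only zero a column using columns whose birth indices are $\geq$ its own; the proof then runs an induction over birth indices using the sets $\DP_b$ of indices at which columns of $\matrixc_b$ become dependent, observing $\DP_b\cap\DP_{b'}=\emptyset$ for $b<b'$, so that by the inductive hypothesis every death in $\DP_{b'}$ with $b'>b$ is already paired, leaving $b$ as the only birth index that Fact~\ref{fa:rankfact} permits to pair with a death in $\DP_b$; a counting argument (the cardinality of $\DP_b$ equals the number of columns of $\matrixc_b$) then accounts for all bars born at $b$, with a separate base case for the essential bars at $b=n$. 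To repair your proof you would need this ordering-plus-disjointness induction, or an equivalent argument tracking the rank function $\rk_{V_\bullet}([a,b])$ over all intervals rather than pointwise dimensions. Incidentally, Equation~\eqref{eq:mobius} is not actually needed once Fact~\ref{fa:rankfact} and the pairing induction are in place; the paper's proof never invokes it.
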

\begin{proof}
In what follows, we abuse notation by denoting the restriction at index $k$  of  a cocycle $\zeta$ born at $b$ also by the symbol $\zeta$. That is, index-wise restrictions are always performed, but not always explicitly mentioned. 
We use $\{\xi_i\}$ to denote cocycles in the persistent cohomology basis computed in Step 1.
The proof uses induction to show  that  for an arbitrary birth index $b$ in $B(\ordmodule)$, if all
bars for the persistent cup module with birth indices $b' > b$ 
 are correctly computed, then
the  bars beginning with $b$ are also correctly computed.

To begin with we  note that in Algorithm {\sc CupPers}, as a consequence of \Cref{prop:mainstruct}, we need to check if an index $k$ is a birth (death) index  of $B(\productmodulek)$ only when it is a birth 
 (death) index of 
$B(\ordmodule)$. Also, from \Cref{cor:onedie}, we know that at most one cycle dies at a death index of $B(\productmodulek)$ (justifying Step 2.2.2).

We now introduce some notation.
In what follows, we  denote the persistent cup module by $V_\bullet$. 
For a birth index $b$, let $\basisB_{b}$ be the cup space at index $b$. Let $C_{b}$ be the vector space of the product cocycle classes created at index $b$.
In particular, the classes in  $C_{b}$ are linearly independent of classes in $\basisB_{b+1}$.
For a birth index $b<n$, $\basisB_{b}$ can be written as  a direct sum $\basisB_{b} = \basisB_{b+1} \oplus C_{b}$. 
For index $n$, we set $\basisB_n = C_n$. Then, for a birth index $b \in \{0,\dots,n\}$,
$C_b$ is a subspace of $\Homol^{\ast}(\complex_b)$.   $C_b$ can be written as:

\begin{equation*} 
C_b = 
\begin{cases}  \langle [\xi_i] \smile [\xi_j] \mid \xi_i, \xi_j \text{ are essential cocycles of } \Homol^\ast(\complex_\bullet) \rangle & \text{ if } b=n \\ 
 \langle [\xi_i] \smile [\xi_j] \mid \xi_i\text{ is born at $b$,} \text{ and }\xi_j \text{ is  born at an index } \geq b \rangle & \text{ if } b<n
\end{cases} 
\end{equation*}

For a birth index $b$, let $\matrixc_{b}$ be the submatrix of $\matrixb$ formed by representatives whose classes generate $C_b$, which augments $\matrixb$ in Step 2.1.2 (i) when $k=b$ in the \textbf{for} loop. The cocycles in $\matrixc_{b}$ are maintained for $k\in 
\{b,\dots,1\}$ via subsequent restrictions to index $k$.
Let $\matrixb_{b}$   be the submatrix of $\matrixb$ containing representative product cocycles that are born at index $\geq b$. Clearly, $\matrixc_{b}$ is a submatrix of $\matrixb_{b}$ for $b<n$, and $\matrixc_{n} = \matrixb_{n}$.

Let $\DP_b$ be the set of filtration indices for which the cocycles in $\matrixc_b$ become successively linearly dependent to other cocycles in $\matrixb_b$.
That is, $d \in \DP_b$ if and only if there exists a cocycle $\zeta$ in $\matrixc_b$ such that $\zeta$ is independent of all cocycles to its left in matrix $\matrixb$ at index $d+1$, but $\zeta$ is either trivial or a linear combination of cocycles to its left at index $d$.

For the base case, we show that the death indices of the essential bars are correctly computed.
First, we observe that for all $d \in \DP_n$, $\rk_{V_\bullet}([d,n]) = \rk_{V_\bullet}([d+1,n]) - 1$. 
Using Fact~\ref{fa:rankfact}, it follows that the algorithm computes the correct barcode for $\productmodulek$ only if the indices in $\DP_n$ are the respective death indices for the essential bars. Since the leftmost columns of $\matrixb$ are  coboundaries from $\partial^{\perp}$ followed by cocycles from $\matrixc_n$, and since we perform only left-to-right column additions in Step 2.2.1 to zero out cocycles in  $\matrixc_n$, the base case holds true. By (another) simple inductive argument, it  follows that the computation of indices in $\DP_n$  does not depend on the specific ordering of representatives within $\matrixc_n$.

Let $b<n$ be a birth index in $B(\ordmodule)$.
For induction hypothesis, assume that for every birth index $b' > b$ the indices in $\DP_{b'}$ are the respective death indices of the bars of $\productmodulek$ born at $b'$.
By construction, the cocycles $\{\zeta_1, \zeta_2, \dots\}$ in $\matrixb$ are sequentially arranged by the following rule: If $\zeta_i$ and $\zeta_j$ are two representative product cocycles in $\matrixb$, then $i<j$ if  the birth index $b_i$ of the interval represented by $\zeta_i$ is greater than or equal to the birth index $b_j$ of the interval represented by $\zeta_j$. Then, as a consequence of the induction hypothesis, for a cocycle $\zeta  \in \matrixc_{b} \setminus \matrixb_{b}$,  we assign the correct birth index to the interval represented by $\zeta$ only if $\zeta$ can be written as a linear combination of cocycles to its left in matrix $\matrixb$. 

Now, suppose that at some index $d \in \DP_b$ we can write a cocycle $\zeta$ in submatrix $\matrixc_{b}$ as a linear combination of cocycles to its left in $\matrixb$.
For such a $d \in \DP_b$, $\rk_{V_\bullet}([d,b]) = \rk_{V_\bullet}([d+1,b]) - 1$.  Hence, using Fact~\ref{fa:rankfact}, a birth index $\geq b$ must be paired with $d$. 

However, since $\DP_b \cap \DP_{b'} = \emptyset$ for $b < b'$, it follows from the inductive hypothesis  that the only birth index that can be paired to $d$ is $b$.
Moreover, since we take restrictions of cocycles in $\matrixb$, all cocycles in $\matrixc_{b}$ eventually become trivial or linearly dependent on cocycles to its left in $\matrixb$. So, $\DP_b$ has the same cardinality as the number of cocycles in  $\matrixc_{b}$, and all the bars that are born at $b$ must die at some index in $\DP_b$.
  
As a final remark, it is easy to check that the computation of indices in $\DP_b$  is independent of the specific ordering of representatives within $\matrixb_b$ by a simple inductive argument.
\end{proof}

\paragraph*{Time complexity of {\sc CupPers}.}
Let the input simplex-wise filtration have $n$ additions and hence the complex $\complex$ have
$n$ simplices.
  Step 1 of {\sc CupPers} can be executed in $O(n^3)$ time using algorithms in \cite{boissonnat2013compressed,de2011dualities}. 
  The outer loop
  in Step 2 runs $O(n)$ times. 
  For each death index in Step 2.2,  we perform  left-to-right column additions as done in the standard persistence algorithm to bring the matrix in reduced form.
  Hence, for each death index, Step 2.2 can be performed in $O(n^3)$ time.
  Since there are at most $O(n)$ death indices, the total cost for Step 2.2 in the course of the algorithm is $O(n^4)$.

  Step 2.1 apparently incurs higher cost than Step 2.2. This is because at each birth point, we have to test the
  product of multiple pairs of cocycles stored in $\matrixh$. However, we observe  that there are at most $O(n^2)$ products of pairs of representative cocycles that are each computed and tested for linear independence at most once. In particular, if $\xi_i$ and  $\xi_j$ represent $(d_i,b_i]$  and $(d_j,b_j]$ resp. with $b_i \leq b_j$, then $\xi_i \smile \xi_j$ is computed and tested for  independence iff $b_i>d_j$ and the test happens at  $b_i$. Using \Cref{eq:simpcup}, computing $\xi_i \smile \xi_j$ takes linear time. So the cost of computing the $O(n^2)$ products is $O(n^3)$.
Moreover, since each  independence test takes $O(n^2)$ time with the assumption
  that $\matrixb$ is kept reduced all the time, Step 2.1 can be implemented to run
  in $O(n^4)$ time over the entire algorithm.
  
 Finally, since restrictions of cocycles in $\matrixb$ and $\matrixh$ are computed by zeroing out corresponding rows, the total time to compute restrictions over the course of the algorithm is $O(n^2)$.
    Combining all costs, we get an $O(n^4)$  complexity bound for {\sc CupPers}.

\section{Algorithm: barcode of persistent k-cup modules}
\label{sec:alg-kproduct}
While considering the \emph{persistent $2$-cup modules} (referred to as \emph{persistent cup modules} in \Cref{sec:alg}) is the natural first step, it must be noted that the  invariants thus computed can still be enriched by considering \emph{persistent $k$-cup modules}.
As a next step, we consider image persistence of  the $k$-fold tensor products.

 \paragraph*{Image persistence of $k$-fold tensor product.}
 Consider  image persistence of the map
 \begin{equation} \label{eqn:tensormapfour}
     \cupmap^{k}:\Chain^{*}(\complex_{\bullet})\otimes\Chain^{*}(\complex_{\bullet}) \otimes \dots \otimes\Chain^{*}(\complex_\bullet) \to\Chain^{*}(\complex_{\bullet})
 \end{equation}
  where the tensor product is taken $k$ times. 
 Taking $G_{\bullet} \,\,= \,\,\cupmap^k$ in the definition of
image persistence, we get
the  module $\im \Homol^{*}(\cupmap^k)$ which is same as the
persistent $k$-cup module introduced in \cite{cuparxivtwo}. Our aim is to compute $B(\im \Homol^{*}(\smile^k \complex_\bullet))$ (written as $B(\im \Homol^{*}(\smile^k_\bullet))$ when the complex is clear from the context). Likewise, the degree-wise barcodes $B(\im \Homol^{p}(\cupmap))$ and $B(\im \Homol^{p}(\cupmap^k))$ can also be defined and computed. We omit the details for brevity.

\begin{definition}
    For any $i\in \{0,\dots,n\}$, a nontrivial cocycle  $\zeta \in \mathsf{Z}^{\ast}(\complex_i)$  is said to be an  \emph{order-$k$ product cocycle of $\complex_i$} if $[\zeta] \in\im \Homol^{*}(\smile_{i}^{k})$.
\end{definition}

\subsection{Computing barcode of persistent k-cup modules} 
The order-$k$ product cocycles can be viewed recursively as 
cup products of order-$(k-1)$ product cocycles with another
cocycle. This suggests a recursive algorithm for computing the barcode of  persistent $k$-cup module: compute the barcode of persistent $(k-1)$-cup module recursively and then use that to compute the  barcode of  persistent $k$-cup module
just like the way we computed persistent $2$-cup module
using the bars for ordinary  persistence. In the algorithm
{\sc OrderkCupPers}, we assume that the barcode with representatives 
for $\mathsf{H}^\ast(\mathsf{K}_\bullet)$ has been
 precomputed which is denoted by the pair of sets
$(\{(d_{i,1},b_{i,1}], \{\xi_{i,1}\})$. For simplicity, we assume that this pair is accessed by the
recursive algorithm as a global variable and  is not passed
at each recursion level. At each recursion level $k$, the algorithm
computes the barcode-representative pair denoted as $(\{(d_{i,k},b_{i,k}], \{\xi_{i,k}\})$.
Here, the cocycles $\xi_{i,k}$ are the initial cocycle representatives (before restrictions) for the bars $(d_{i,k},b_{i,k}]$. At the time of their respective births $b_{i,k}$, they are stored in the field $\xi_{i,k} \cdot {\rm rep@birth}$. \\

\noindent
{\bf Algorithm} {\sc OrderkCupPers} ($\mathsf{K}_\bullet$,$k$)\label{algorithm:orderkcuppers}
\begin{itemize}
    \item Step 1. If $k=2$, return the barcode with representatives 
    $\{(d_{i,2},b_{i,2}], \xi_{i,2}\}$ 
    computed by {\sc CupPers} on
    $\mathsf{K}_\bullet$\\
    else $\{(d_{i,k-1},b_{i,k-1}],\xi_{i,k-1}\}$ $\leftarrow$ {\sc OrderkCupPers}($\mathsf{K}_\bullet$, $k-1$)\\
  \hspace*{0.2in} Let $\matrixh=\{\xi_{i,1}  \mid [\xi_{i,1}] \text{ essential \& }\deg(\xi_{i,1})>0 \}$; 
  $\matrixr:=\{\xi_{i,k-1} \mid b_{i,k-1}=n\}$; $\matrixb:=\partial^{\perp}$;
    \item Step 2. For $\ell:=n$ to $1$ do
    \begin{itemize}
        \item Restrict the cocycles in $\matrixb$, $\matrixr$, and $\matrixh$ to index $\ell$; %by restricting them to $\mathsf{K}_k$;
        \item Step 2.1 For  every $r$ s.t. $b_{r,1} = \ell \neq n$ (i.e., $\ell$ is a birth-index) and $\deg(\xi_{r,1})>0 $   
        \begin{itemize}
            \item Step 2.1.1 Update $\matrixh:=[\matrixh \mid \xi_{r,1}]$ 
            \item Step 2.1.2 For every  $\xi_{j,k-1} \in \matrixr$ %, $\xi_j\not = \xi_i$,
            \begin{enumerate}
                \item [i.] If $(\zeta\leftarrow \xi_{r,1}\smile \xi_{j,k-1})\not = 0$ and $\zeta$ is
                independent in $\matrixb$, then $\matrixb:=[\matrixb \mid \zeta]$ with column $\zeta$ annotated
                as $\zeta\cdot {\rm birth}:=\ell$ and $\zeta\cdot {\rm rep@birth}:= \zeta$
            \end{enumerate}
        \end{itemize}

        \item Step 2.2 For all $s$ such that  $\ell=b_{s,k-1}$      
        \begin{itemize}
            \item Step 2.2.1 If $\ell\not = n$, update  $\matrixr:=[\matrixr\mid \xi_{s,k-1}]$
            \item Step 2.2.2 For every  $\xi_{i,1} \in \matrixh$ %, $\xi_j\not = \xi_i$,
            \begin{enumerate}
                \item [i.] If $(\zeta\leftarrow \xi_{s,k-1}\smile \xi_{i,1})\not = 0$ and $\zeta$ is
                independent in $\matrixb$, then $\matrixb:=[\matrixb \mid \zeta]$ with column $\zeta$ annotated
                as $\zeta\cdot {\rm birth}:=\ell$ and $\zeta\cdot {\rm rep@birth}:= \zeta$
            \end{enumerate}
        \end{itemize}        
        
        \item Step 2.3 If $\ell=d_{i,1}$ (i.e. $\ell$ is a death-index)  and $\deg(\xi_{i,1})>0 $  for some $i$ then 
        \begin{itemize}
         \item Step 2.3.1 Reduce $\matrixb$ with left-to-right column additions 
         \item Step 2.3.2 If a nontrivial cocycle $\zeta$ is zeroed out, remove $\zeta$ from $\matrixb$, generate the
         bar-representative pair $\{(\ell,\zeta\cdot {\rm birth}],\zeta\cdot {\rm rep@birth}\}$
         \item Step 2.3.3 Remove the column $\xi_{i,1}$ from $\matrixh$ 
         \item Step 2.3.4 Remove the column $\xi_{j,k-1}$ from $\matrixr$ if $d_{j,k-1}=\ell$ for some $j$
        \end{itemize}
    \end{itemize}
\end{itemize}

A high-level pseudocode for computing the barcode of  persistent $k$-cup module is given by algorithm {\sc OrderkCupPers}. The algorithm calls itself recursively to generate the
sets of bar-representative pairs for the  persistent $(k-1)$-cup module. As in the case of persistent $2$-cup modules, birth and death indices
of  order-$k$ product cocycle classes are subsets of birth and death indices resp.
of  ordinary persistence. Thus, as before, at each birth index of the cohomology module, we check if the cup product of a representative cocycle (maintained
in matrix $\matrixh$) with a
representative for persistent $(k-1)$-cup module
(maintained in matrix $\matrixr$) generates 
a new cocycle in the barcode for persistent $k$-cup module  (Steps 2.1.2(i), 2.2.2(i)). If so,
we note this birth with the resp. cocycle (by annotating the column) and add it to
the matrix $\matrixb$ that maintains a basis for live order-$k$ product cocycles. At each
death index, we check if an order-$k$ product cocycle dies by checking if the matrix $\matrixb$
loses a rank through restriction (Step 2.3.1). If so, the cocycle in $\matrixb$ that becomes
dependent to other cocycles through a matrix reduction is designated
to be killed (Step 2.3.2) and we note the death of a bar in the $k$-cup module  barcode.
We update $\matrixh$, $\matrixr$ appropriately (Steps 2.3.3, 2.3.4). At a high level,
this algorithm is similar to {\sc CupPers} with the role of $\matrixh$ played by
both $\matrixh$ and $\matrixr$ as they host the cocycles whose products are
to be checked during the birth and the role of $\matrixb$ in both algorithms remains
the same, that is, check if a product cocycle dies or not.

\paragraph*{Correctness and complexity of {\sc OrderkCupPers}.} Correctness can be
established the same way as for {\sc CupPers}. See \Cref{sec:orderk} for a sketch of the proof. For  complexity, observe that we incur a cost from recursive calling
in Step 1 and $O(n^4)$ cost from Step 2 with a similar analysis we did for {\sc CupPers} while noting that there are once again a total of $O(n^2)$ product cocycles to be checked for independence at birth (Steps 2.1 and 2.2).
Then, we get a recurrence for time complexity as $T(n,k)=T(n, k-1)+ O(n^4)$ and $T(n,2)=O(n^4)$
which solves to $T(n,k)=O(kn^4)$. Note that $k\leq d$, the dimension of $\complex$. 
This gives an $O(d n^4)$ algorithm for computing the barcodes of persistent $k$-cup modules for all $k \in \{2,\dots,d \}$.

\subsection{Persistent cup-length: faster computation}
\label{sec:originalbetter}
The \emph{cup length} of a ring is defined as the maximum number of multiplicands that together give a nonzero product in the ring.
Let $\Int_{\ast} $ denote the set of all closed  intervals of $\mathbb{R}$.
%Following \cite{contessoto}, the intervals of $\Int_{\ast}$ are written with angular brackets $\langle -, -\rangle$. 
Let $\mathcal{F}$ be an $\mathbb{R}$-indexed filtration of simplicial complexes. 
The \emph{persistent cup-length function} $\cupprod_{\bullet}:\Int_{\ast}\to\NBB$ 
is defined as a function 
from the set of closed intervals to the set of non-negative integers, which assigns to each interval $[ a,b]$, the cup-length of the image ring  $\im\big(\Homol^*(\complex)[ a,b]\big)$, which is the ring $\im\big(\Homol^*(\complex_b)\to\Homol^*(\complex_a)\big)$. %when $[ a,b]$ is a closed interval. 

Given a $P$-indexed filtration $\mathcal{F}$ of a $d$-complex $\complex$ of size $n$, let $V^k_\bullet$ denote
its persistent $k$-cup module.
Leveraging the fact that $\cupprod_{\bullet}([a,b])={\rm argmax}\{k \mid \rk_{V^k_\bullet}([a,b])\not = 0\}$ (see Proposition 5.9 in \cite{cuparxivtwo}),  the algorithm described in \Cref{sec:alg-kproduct} can be used to compute the persistent cup-length in $O(dn^4)$  time, 
whereas $O(n^{d+2})$ is a coarse estimate for the runtime of the algorithm described in \cite{contessoto}. 
Thus, for $d\geq 3$, our complexity bound for computing the persistent cup length is strictly better. 
%Also, our algorithm is polynomial-time even when $d$ is not a fixed constant. 
We refer the reader to \Cref{sec:better} for further details.

\section{Partition modules of the cup product: a more refined invariant} \label{sec:partition}

A partition $\lambda_q$ of an integer $q$ is a multiset of integers that sum to $q$, written as $\lambda_q \vdash q$.
 That is, a multiset $\lambda_q = \{s_1,s_2,\dots,s_\ell\}$ is a partition of $q$ if $s_1 + s_2 + \dots + \dots s_\ell = q$. 
 The integers $s_1,s_2,\dots,s_\ell$ are  non-decreasing.
 For every partition $\lambda_q$ of $q$, we  define a submodule $ \im \Homol^{\lambda_q}(\smile \complex_\bullet))$ (written as $ \im \Homol^{\lambda_q}(\smile_\bullet))$ when $\complex$ is clear from context) of $ \im \Homol^{q}(\smile_\bullet^\ell))$:
 \[ \im \Homol^{\lambda_q}(\smile_i)) = \langle [\alpha_1] \smile [\alpha_2] \smile \dots \smile [\alpha_{\ell}] \mid [\alpha_j] \in \Homol^{s_j}(\complex_i) \text{ for } j\in[\ell]  \rangle. \]
The structure map $\im \Homol^{\lambda_q}(\smile_i)) \to \im \Homol^{\lambda_q}(\smile_{i-1}))$ is the restriction of $\cohommap_i$ to $\im \Homol^{\lambda_q}(\smile_i))$.

For an integer $q\geq 1$, let $\mathcal{P}(q)$ denote the number of partitions of $q$. In \cite{partsbound}, Pribitkin proved that for $q\geq 1$, $\mathcal{P}(q) < \frac{e^{c\sqrt{q}}}{q^{\frac{3}{4}}} $, where $c = \pi \sqrt{\nicefrac{2}{3}}$.
 For a $d$-complex $\complex$, 
let $\mathcal{P}^{\uparrow}(d)$ denote the total number of partition modules. Below, we obtain an  upper bound for $\mathcal{P}^{\uparrow}(d)$.

 \begin{equation*}
    \mathcal{P}^{\uparrow}(d) \,\, = \,\, \sum_{q=2}^d \mathcal{P}(q) \,\,< \,\, \sum_{q=2}^d \frac{e^{c\sqrt{q}}}{q^{\frac{3}{4}}} 
   \,\, < \,\, {d^{\frac{1}{4}}} e^{c\sqrt{d}}.
 \end{equation*}

When $d$ is small, as is often the case in practice, $ \mathcal{P}^{\uparrow}(d)$ is also small. For instance, $ \mathcal{P}^{\uparrow}(2) = 1$, $ \mathcal{P}^{\uparrow}(3) = 3$, $ \mathcal{P}^{\uparrow}(4) =  7$.

\paragraph*{Partition modules are more discriminative.} From \Cref{rem:diff} and \Cref{ex:nice}, it follows that barcodes of partition modules are a strictly finer invariant compared to barcodes of cup modules. 

\begin{remark} \label{rem:diff}
     Given two filtrations $\complex_{\bullet}$ and $\altcomplex_{\bullet}$, suppose that for some $\ell$ and $q$, $ \im \Homol^{q}(\smile^\ell \complex_\bullet))$ and $ \im \Homol^{q}(\smile^\ell \altcomplex_\bullet))$ are distinct. Without loss of generality, there exists a bar $(d,b]$ in $B( \im \Homol^{q}(\smile \complex_\bullet)))$ with no matching bar in $B( \im \Homol^{q}(\smile \altcomplex_\bullet)))$. Let $\zeta$ be a representative for the bar $(d,b]$. Then, $[\zeta]$ can be written as $[\zeta_1] \smile [\zeta_2] \smile \dots \smile [\zeta_{\ell}]$ in $\complex_b$. Let $s_i$ for each $i\in[\ell]$ denote the degree of cocycle class $[\zeta_i]$. Then, $\lambda_q = \{s_1,s_2,\dots,s_\ell\}$ is a partition of $q$. It follows that the bar $(d,b]$ will be present in  $ B( \im \Homol^{\lambda_q}(\smile \complex_\bullet)))$  but not in  $ B( \im \Homol^{\lambda_q}(\smile \altcomplex_\bullet)))$.
\end{remark}

\begin{example} \label{ex:nice}
    Let $\altcomplex^1  = (S^3 \times S^1) \vee S^2 \vee S^2$ and $\altcomplex^2  = (S^2 \times S^2) \vee S^1 \vee S^3$. The natural cell filtrations    $\altcomplex^1_\bullet$ and $\altcomplex^2_\bullet$ have isomorphic persistence modules and persistent cup modules. While  $\altcomplex^1_{\bullet}$ has a nontrivial barcode for $ \im \Homol^{(3,1)}$ and a trivial barcode for $ \im \Homol^{(2,2)}$, the opposite is true for $\altcomplex^2_{\bullet}$.

The barcodes for the persistence modules (using the convention from \Cref{sec:perscoh}) are
    \begin{align*}
       B(\Homol^0(\altcomplex^1_\bullet)) =B(\Homol^0(\altcomplex^2_\bullet)) &= \{(-1,4]\}, \\
       B(\Homol^1(\altcomplex^1_\bullet)) = B(\Homol^1(\altcomplex^2_\bullet)) &= \{(0,4]\}, \\
       B(\Homol^2(\altcomplex^1_\bullet)) = B(\Homol^2(\altcomplex^2_\bullet)) &= \{(1,4],(1,4]\}  \\
       B(\Homol^3(\altcomplex^1_\bullet)) = B(\Homol^3(\altcomplex^2_\bullet)) &= \{(2,4]\} \text{ and } \\
       B(\Homol^4(\altcomplex^1_\bullet)) = B(\Homol^4(\altcomplex^2_\bullet)) &= \{(3,4]\}. 
    \end{align*}
  For the persistent cup modules, $B(\im \Homol^{4}(\smile \altcomplex^1_{\bullet})) = B(\im \Homol^{4}(\smile \altcomplex^2_{\bullet})) = \{(3,4]\}$. For other degrees, the persistent cup modules are trivial.

  Finally, for partition modules $B(\im \Homol^{(2,2)}(\smile \altcomplex^2_{\bullet})) = \{(3,4]\}$ and $B(\im \Homol^{(2,2)}(\smile \altcomplex^1_{\bullet})) $ is empty, while   $B(\im \Homol^{(3,1)}(\smile \altcomplex^2_{\bullet}))$ is empty and $B(\im \Homol^{(3,1)}(\smile \altcomplex^1_{\bullet})) = \{(3,4]\}$.
  
\end{example}

From \Cref{ex:nicetwoexp}, we see that partition modules are not a complete invariant.

\begin{example} \label{ex:nicetwoexp}
    Let $\altaltcomplex^1$ be the  $3$-torus, and $\altaltcomplex^2  = \mathbb{RP}^2 \vee \mathbb{RP}^2 \vee \mathbb{RP}^3$. The natural cell filtrations $\altaltcomplex^1_\bullet$ and $\altaltcomplex^2_\bullet$ have isomorphic persistence modules, isomorphic persistent cup modules as well as isomorphic partition modules. Yet,  $\altaltcomplex^1$ and $\altaltcomplex^2$  have non-isomorphic cohomology algebras.

    The barcodes for the persistence modules are
    \begin{align*}
       B(\Homol^0(\altaltcomplex^1_\bullet)) =B(\Homol^0(\altaltcomplex^2_\bullet)) &= \{(-1,3]\}, \\
       B(\Homol^1(\altaltcomplex^1_\bullet)) = B(\Homol^1(\altaltcomplex^2_\bullet)) &= \{(0,3],(0,3],(0,3]\}, \\
       B(\Homol^2(\altaltcomplex^1_\bullet)) = B(\Homol^2(\altaltcomplex^2_\bullet)) &= \{(1,3],(1,3],(1,3]\} \text{ and }  \\
       B(\Homol^3(\altaltcomplex^1_\bullet)) = B(\Homol^3(\altaltcomplex^2_\bullet)) &= \{(2,3]\}. 
    \end{align*}   
     The barcodes for the persistence cup modules are
    \begin{align*}   
    B(\im \Homol^2(\smile \altaltcomplex^1_\bullet)) = B(\im \Homol^2(\smile \altaltcomplex^2_\bullet)) &= \{(1,3],(1,3],(1,3]\} \text{ and }  \\
       B(\im \Homol^3(\smile \altaltcomplex^1_\bullet)) = B(\im \Homol^3(\smile \altaltcomplex^2_\bullet)) &= \{(2,3]\}. 
    \end{align*}    
     The barcodes for the partition modules are
    \begin{alignat*}{3}   
    &B(\im \Homol^{(1,1)}(\smile \altaltcomplex^1_\bullet)) &&= B(\im \Homol^{(1,1)}(\smile \altaltcomplex^2_\bullet)) &&= \{(1,3],(1,3],(1,3]\},   \\
       &B(\im \Homol^{(2,1)}(\smile \altaltcomplex^1_\bullet)) &&= B(\im \Homol^{(2,1)}(\smile \altaltcomplex^2_\bullet)) &&= \{(2,3]\} \text{ and }   \\
       &B(\im \Homol^{(1,1,1)}(\smile \altaltcomplex^1_\bullet)) &&= B(\im \Homol^{(1,1,1)}(\smile \altaltcomplex^2_\bullet)) &&= \{(2,3]\}. 
   \end{alignat*}  
   The cohomology algebra $\Homol^{\ast}(\altaltcomplex^1) \approx \mathbb{Z}_2[a,b,c]/(a^2,b^2,c^2)$. 
   Note that $\Homol^{\ast}(\mathbb{RP}^2) \approx \mathbb{Z}_2[a]/(a^3)$ and $\Homol^{\ast}(\mathbb{RP}^3) \approx \mathbb{Z}_2[a]/(a^4)$.
   Let $\Homol^{>}$ denote the positive parts of $\Homol^{\ast}$. Then, the cohomology algebra of $\altaltcomplex^2$ is $\Homol^{\ast}(\altaltcomplex^2) \approx \mathbb{Z}_2\mathbf{1}\oplus \Homol^{>}(\mathbb{RP}^2) \oplus  \Homol^{>}(\mathbb{RP}^2) \oplus \Homol^{>}(\mathbb{RP}^3).$

   Unlike  $\Homol^{\ast}(\altaltcomplex^2)$, there does not exist a cocycle $x$ in the algebra $\Homol^{\ast}(\altaltcomplex^1)$ such that $x^3$ is nonzero. Hence $\Homol^{\ast}(\altaltcomplex^1)$ and $\Homol^{\ast}(\altaltcomplex^2)$ are non-isomorphic.
   
\end{example}

\bigskip

The barcodes of all the partition modules of the cup product can be computed in $ O({d^{\frac{1}{4}}} e^{c\sqrt{d}}n^4) $ time, where $c = \pi \sqrt{\nicefrac{2}{3}}$ time. The algorithm for computing them is described in \Cref{sec:algpart}. In \Cref{sec:stab}, using functoriality of the cup product, we observe that partition modules are stable for $\cechfull$ and $\ripsfull$ filtrations w.r.t. the interleaving distance.

\subsection{ Algorithm: partition modules} \label{sec:algpart}

To begin with,  {\sc CupPers2Parts} describes an algorithm for computing the barcode of the module $ \im \Homol^{\lambda_q}(\smile_\bullet))$ for $\lambda_q \vdash q$ when $|\lambda_q| = 2$. 
First, in Step $0$, we need to check if the barcode for the partition  $\lambda_q = \{s_1,s_2\}$ has already been computed because {\sc CupPers2Parts} is called from {\sc ExtendCupPersKParts}  possibly multiple times with the same argument $\lambda_q$. 
 In Step 1, we
compute the barcode of the cohomology persistence module $\mathsf{H}^\ast(\mathsf{K}_\bullet)$
along with a persistent cohomology basis.
As in {\sc CupPers2Parts}, a basis  is maintained with the matrix $\matrixh$  whose columns are (restricted) representative cocycles. The matrix $\matrixh$ is initialized with essential cocycles. The matrix $\matrixb$ is initialized with the coboundary matrix $\partial^{\perp}$ with standard cochain basis. Subsequently, nontrivial cocycle vectors are added to $\matrixb$. For every $k$, the classes of the nontrivial cocycles in matrix $\matrixb$ form a basis for $ \im \Homol^{\lambda_q}(\smile_k))$. In particular, a cocycle $\zeta = \xi_1 \cup \xi_2$ is added to $\basisB$ only if $\deg(\xi_1)=s_1$ and $\deg(\xi_2) = s_2$ or vice versa. Other than the details mentioned here, {\sc CupPers2Parts} is identical to {\sc CupPers}.

\medskip

\noindent
{\bf Algorithm} {\sc CupPers2Parts} ($\mathsf{K}_\bullet,\lambda_q$)\label{algorithm:cupperspart}
\begin{itemize}
    \item Step 0. If the barcode for the partition $\lambda_q = \{s_1,s_2\}$ has already been computed, then return the barcode with representatives 
    $\{(d_{i,2},b_{i,2}], \xi_{i,2}\}$. 
    \item Step 1. Compute barcode $\barcode =\{(d_i,b_i]\}$ of $\Homol^\ast(\complex_\bullet)$
    with representative cocycles $\xi_i$; Let $\matrixh=\{\xi_i  \mid [\xi_i] \text{ essential} \}$; Initialize $\matrixb$  with the coboundary matrix $\partial^{\perp}$ obtained by taking transpose of the boundary matrix $\partial$;
    \item Step 2. For $k:=n$ to $1$ do
    \begin{itemize}
        \item Restrict the cocycles in $\matrixb$ and $\matrixh$ to index $k$; %by restricting them to $\mathsf{K}_k$;
        \item Step 2.1 For every $i$ s.t.  $k=b_i$ ($k$ is a birth-index)
        \begin{itemize}
            \item Step 2.1.1 If $k\not = n$, update $\matrixh:=[\matrixh \mid \xi_i]$
            \item Step 2.1.2 If $\deg(\xi_i) = s_1$ 
            \begin{enumerate}
            \item Step 2.1.2.1 For every $\xi_j \in \matrixh$ with $\deg(\xi_j) = s_2$ %, $\xi_j\not = \xi_i$,
            \begin{enumerate}
                \item [i.] If $(\zeta\leftarrow \xi_i\smile \xi_j)\not = 0$ and $\zeta$ is
                independent in $\matrixb$, then $\matrixb:=[\matrixb \mid \zeta]$ with column $\zeta$ annotated
                as $\zeta\cdot {\rm birth}:=k$ and  $\zeta\cdot {\rm rep@birth}:= \zeta$
            \end{enumerate}
            \end{enumerate}
            \item Step 2.2.2 If $\deg(\xi_i) = s_2$ and $s_1 \neq s_2$ 
            \begin{enumerate}
            \item Step 2.2.2.1 For every $\xi_j \in \matrixh$ with $\deg(\xi_j) = s_1$ %, $\xi_j\not = \xi_i$,
            \begin{enumerate}
                \item [i.] If $(\zeta\leftarrow \xi_i\smile \xi_j)\not = 0$ and $\zeta$ is
                independent in $\matrixb$, then $\matrixb:=[\matrixb \mid \zeta]$ with column $\zeta$ annotated
                as $\zeta\cdot {\rm birth}:=k$ and  $\zeta\cdot {\rm rep@birth}:= \zeta$
            \end{enumerate}
            \end{enumerate}
        \end{itemize}
        \item Step 2.2 If $k=d_i$ for some $i$ then ($k$ is a death-index)
        \begin{itemize}
         \item Step 2.2.1 Reduce $\matrixb$ with left-to-right column additions 
         \item Step 2.2.2 If a nontrivial cocycle $\zeta$ is zeroed out, remove $\zeta$ from $\matrixb$, generate the
         bar-representative pair $\{(k,\zeta\cdot {\rm birth}],\zeta\cdot {\rm rep@birth}\}$
         \item Step 2.2.3 Update $\matrixh$ by removing the column $\xi_i$
        \end{itemize}
    \end{itemize}
\end{itemize}

\begin{definition}[Refinement of a partition]
   Let $\lambda_q$ and $\lambda_q'$ be partitions of $q$. We say $\lambda_q$ \emph{refines} $\lambda_q'$ if the parts of $\lambda_q'$ can be subdivided to produce the parts of $\lambda_q$. 
\end{definition}

For example, $(1,1,1,1)  \vdash 4$ and $(1,2,1)  \vdash 4$  and  $(1,1,1,1)$ is a refinement of $(1,2,1)$.

\begin{remark}
 If a partition $\lambda_q$ is a refinement of a partition  $\lambda_q'$,  then   $\im \Homol^{\lambda_q}(\smile_\bullet))$ is a submodule of $\im \Homol^{\lambda_q'}(\smile_\bullet))$.
\end{remark}

\begin{definition}[Extension of a partition]
Let $p$ and $q$ be integers, with $q > p$.
   Let $\lambda_q = (s_1,s_2,\dots,s_m)$ be a partition of $q$ and $\lambda_p = (s'_1,s'_2,\dots,s'_\ell)$ be a partition of $p$  for some integers $\ell$ and $m$, with $m>\ell$. We say $\lambda_q$ \emph{extends} $\lambda_p$ if $s_i = s'_i$ for $i \in [\ell]$. We say that $\lambda_q$ \emph{extends} $\lambda_p$ \emph{by one} if $|\lambda_q| = |\lambda_p| + 1$.
\end{definition}

For example, $(2,2) \vdash 4$ and $(2,2,3) \vdash 5$, and $(2,2,3)$ \emph{extends $(2,2)$ by one}.  

\medskip

We now describe {\sc ExtendCupPersKParts} that gives an algorithm for computing the barcode of the module $ \im \Homol^{\lambda_t}(\smile_\bullet))$ for $\lambda_t \vdash t$.
In Step $0$, we check if the barcode for the partition  $\lambda_t$ has already been computed because {\sc ExtendCupPersKParts} is called recursively from {\sc ExtendCupPersKParts}  possibly multiple times with the same argument $\lambda_t$. 
In Step $1$, we first check if $|\lambda_t| = 2$, in which case, we invoke {\sc CupPers2Parts} and return.
Otherwise, $|\lambda_t| = k > 2$, and the algorithm calls itself recursively to generate the sets of bar-representative pairs for the module $ \im \Homol^{\lambda_q}(\smile_\bullet))$, where $\lambda_t$ is a partition that extends $\lambda_q$ by one. As in the case of {\sc OrderkCupPers}, the birth and death indices
of  order-$k$ product cocycle classes are subsets of birth and death indices resp.
of  ordinary persistence. Therefore, at each birth index of the cohomology module, we check if the cup product of a representative cocycle with degree $t-q$ (maintained
in matrix $\matrixh$) with a
representative for $ \im \Homol^{\lambda_q}(\smile_\bullet))$
(which has degree $q$ and is maintained in matrix $\matrixr$) generates 
a new cocycle in the barcode for $ \im \Homol^{\lambda_t}(\smile_\bullet))$  (Steps 2.1.2(i), 2.2.2(i)). If so,
we note this birth with the resp. cocycle (by annotating the column) and add it to
the matrix $\matrixb$ that maintains a basis for live order-$k$ product cocycles whose respective degrees form a partition $\lambda_t$ of $t$.The case of death (Step 2.3) is identical to {\sc OrderkCupPers}.

\medskip

\noindent
{\bf Algorithm} {\sc ExtendCupPersKParts} ($\mathsf{K}_\bullet$,$\lambda_t$)\label{algorithm:cupperskparts}
\begin{itemize}
    \item Step 0. If the barcode for the partition $\lambda_t$ has already been computed, then return the barcode with representatives 
    $\{(d_{i,k},b_{i,k}], \xi_{i,k}\}$. 
    Else, let  $\lambda_q$ be any partition such that $\lambda_t$ extends $\lambda_q$ by one, and let $k = |\lambda_t|$.  
    \item Step 1. If $|\lambda_t|=2$, return the barcode with representatives 
    $\{(d_{i,2},b_{i,2}], \xi_{i,2}\}$ 
    computed by {\sc CupPers2Parts}$(\mathsf{K}_\bullet,\lambda_t)$\\
    Set $\{(d_{i,k-1},b_{i,k-1}],\xi_{i,k-1}\}$ $\leftarrow$ {\sc ExtendCupPersKParts}($\mathsf{K}_\bullet$,$\lambda_q$)\\
  \hspace*{0.2in} Let $\matrixh=\{\xi_{i,1}  \mid [\xi_{i,1}] \text{ essential and }\deg(\xi_{i,1})=t-q \}$; 
  $\matrixr:=\{\xi_{i,k-1} \mid b_{i,k-1}=n\}$; $\matrixb:=\partial^{\perp}$;
    \item Step 2. For $\ell:=n$ to $1$ do
    \begin{itemize}
        \item Restrict the cocycles in $\matrixb$, $\matrixr$, and $\matrixh$ to index $\ell$; %by restricting them to $\mathsf{K}_k$;
        \item Step 2.1 For every $r$ s.t.  $b_{r,1} = \ell \neq n$ (i.e., $\ell$ is a birth-index) and $\deg(\xi_{r,1})= t -q $  
        \begin{itemize}
            \item Step 2.1.1 Update $\matrixh:=[\matrixh \mid \xi_{r,1}]$ 
            \item Step 2.1.2 For every  $\xi_{j,k-1} \in \matrixr$ %, $\xi_j\not = \xi_i$,
            \begin{enumerate}
                \item [i.] If $(\zeta\leftarrow \xi_{r,1}\smile \xi_{j,k-1})\not = 0$ and $\zeta$ is
                independent in $\matrixb$, then $\matrixb:=[\matrixb \mid \zeta]$ with column $\zeta$ annotated
                as $\zeta\cdot {\rm birth}:=\ell$ and $\zeta\cdot {\rm rep@birth}:= \zeta$
            \end{enumerate}
        \end{itemize}

        \item Step 2.2 For all $s$ such that  $\ell=b_{s,k-1}$      
        \begin{itemize}
            \item Step 2.2.1 If $\ell\not = n$, update  $\matrixr:=[\matrixr\mid \xi_{s,k-1}]$
            \item Step 2.2.2 For every  $\xi_{i,1} \in \matrixh$ %, $\xi_j\not = \xi_i$,
            \begin{enumerate}
                \item [i.] If $(\zeta\leftarrow \xi_{s,k-1}\smile \xi_{i,1})\not = 0$ and $\zeta$ is
                independent in $\matrixb$, then $\matrixb:=[\matrixb \mid \zeta]$ with column $\zeta$ annotated
                as $\zeta\cdot {\rm birth}:=\ell$ and $\zeta\cdot {\rm rep@birth}:= \zeta$
            \end{enumerate}
        \end{itemize}        
        
        \item Step 2.3 If $\ell=d_{i,1}$ (i.e. $\ell$ is a death-index)  then 
        \begin{itemize}
         \item Step 2.3.1 Reduce $\matrixb$ with left-to-right column additions 
         \item Step 2.3.2 If a nontrivial cocycle $\zeta$ is zeroed out, remove $\zeta$ from $\matrixb$, generate the
         bar-representative pair $\{(\ell,\zeta\cdot {\rm birth}],\zeta\cdot {\rm rep@birth}\}$
         \item Step 2.3.3 Remove the column $\xi_{i,1}$ from $\matrixh$ 
         \item Step 2.3.4 Remove the column $\xi_{j,k-1}$ from $\matrixr$ if $d_{j,k-1}=\ell$ for some $j$
        \end{itemize}
    \end{itemize}
\end{itemize}

For every $k \in \{2,\dots,d\}$, {\bf Algorithm} {\sc ComputePartitionBarcodes} first generates all partitions of integer $k$, and then for every partition $\lambda_k$ of $k$ computes the barcode of the partition module $ \im \Homol^{\lambda_k}(\smile_\bullet))$.

\noindent
{\bf Algorithm} {\sc ComputePartitionBarcodes} ($\mathsf{K}_\bullet$)\label{algorithm:cmoputeparts}

\begin{itemize}
    \item Step 1. For $k:=2$ to $d$ do
    \begin{itemize}
        \item Step 1.1 Compute the set of partitions of $k$. Denote it by $\Lambda_k$.
        \item Step 1.2 For every partition $\lambda_k \in \Lambda_k$ do
        \begin{itemize}
             \item Step 1.2.1 $\{(d_{i,|\lambda_k}|,b_{i,|\lambda_k}|],\xi_{i,|\lambda_k}|\}$ $\leftarrow$ {\sc ExtendCupPersKParts}($\mathsf{K}_\bullet$,$\lambda_k$).
        \end{itemize}
    \end{itemize}
\end{itemize}

\paragraph*{Correctness and complexity.}
The correctness proofs for  {\sc CupPers} and {\sc ExtendCupPersKParts} are identical to those of {\sc CupPers2Parts} and {\sc OrderkCupPers}, respectively.

All partitions of an integer $k$ can be generated in output-sensitive time using partitions of integer $k-1$. For instance, see \cite{partitioncode} for a Python code to do the same. 
Hence, Step 1.1 of {\sc ComputePartitionBarcodes} runs in  time $O(\mathcal{P}^{\uparrow}(d))$ which is upper bounded by $
   O({d^{\frac{1}{4}}} e^{c\sqrt{d}})$, where $c = \pi \sqrt{\nicefrac{2}{3}}$  (See \Cref{sec:partition}). 
Note that {\sc ExtendCupPersKParts} (and  {\sc CupPers2Parts}) executes beyond Steps $0$ with a parameter $\lambda_k$ only when it is called for the first time with that parameter. The total number of calls to {\sc ExtendCupPersKParts} that proceed to Steps $1$ is, therefore, bounded by $\mathcal{P}^{\uparrow}(d)$.
If there are subsequent recursive calls to {\sc ExtendCupPersKParts} with $\lambda_k$ as a parameter it returns at Step $0$. Note that {\sc ExtendCupPersKParts} calls itself recursively only once (in Step $1$). So the total number of calls where {\sc ExtendCupPersKParts} returns at Step $0$ is bounded by $\mathcal{P}^{\uparrow}(d)$. If {\sc ExtendCupPersKParts} returns at Step $0$, the cost of execution is $O(1)$, else it is $O(n^4)$. Hence, the total cost of   Step 1.2 of  {\sc ComputePartitionBarcodes} is $\mathcal{P}^{\uparrow}(d) O(n^4)$ which is $O({d^{\frac{1}{4}}} e^{c\sqrt{d}} n^4)$.

\subsection{ Stability} \label{sec:stab}

We  establish stability of partition  modules of the cup product for  Rips and $\cechfull$ complexes. In particular, we show that when the Gromov-Haudorff distance  (Hausdorff distance) between a point cloud and its perturbation is bounded by a small constant, then the interleaving distance between barcodes of respective Rips ($\cechfull$)partition modules is also bounded by a small constant.
\subsubsection{Geometric complexes}
\begin{definition}[ Rips complexes]
Let $X$ be a finite point set in $\mathbb{R}^d$. 
The  Rips complex of $X$ at scale $t$ consists of all simplices with diameter at most $t$, where the diameter of a simplex is the maximum distance between any two points in the simplex. In other words, 
\[ \Rips_t(X) = \{ S \subset X \mid \diam S \leq t  \}. \]
The Rips filtration of $X$, denoted by $\Rips_{\bullet}(X)$, is the nested sequence of complexes $\{\Rips_t(X)\}_{t\geq 0}$, where $\Rips_s(X) \subseteq \Rips_t(X) $ for $s \leq t$. 
\end{definition}

\begin{definition}[$\cechfull$ complexes]
Let $X$ be a finite point set in $\mathbb{R}^d$. Let $D_{r,x}$ denote a Euclidean ball of radius $r$ centered at $x$. The $\cechfull$ complex of $X$ for radius $r$ 
consists of all simplices satisfying the following condition:
%is the abstract simplicial complex given by 
\[
\Cech_r(X) = \{ S \subset X \mid \bigcap_{x\in S}  D_{r,x}\not = \emptyset\}.
\]

The $\cechfull$ filtration of $X$, denoted by $\Cech_{\bullet}(X)$, is the nested sequence of complexes $\{\Cech_r(X)\}_{r\geq 0}$, where $\Cech_s(X) \subseteq \Cech_t(X) $ for $s \leq t$. 
\end{definition}

\subsubsection{Gromov-Hausdorff distance}

Let $X$ and $Y$ be compact subspaces of a metric space $M$ with
distance $d$.
For a point $p\in X$, $d(p;Y)$ is defined as
\[
d(p,Y)=\inf\left\{ d(p,q)\mid q\in Y\right\} 
\]
and the distance $d(X,Y)$ between spaces $X$ and $Y$ is defined as 
\[
d(X,Y)=\sup\left\{ d(p,Y)\mid p\in X\right\}. 
\]
The Hausdorff distance $d_H$ between $X$ and $Y$ is defined as
\[
d_{H}(X,Y)=\max\left\{ d(X,Y),d(Y,X)\right\}. 
\]
The Gromov-Hausdorff distance $d_{GH}$ between $X$ and Y is defined as 
\[
d_{GH}(X,Y)=\inf\left\{ d_{H}(f(X);g(Y))\mid f:X\hookrightarrow M\,\,,\,\,g:Y\hookrightarrow M\right\} 
\]
where the infimum is taken over all isometric embeddings $f:X\hookrightarrow M\,\,,\,\,g:Y\hookrightarrow M$
into some common metric space $M$.

\subsubsection{Stability of partition modules} \label{sec:stability}

In this section, as a direct consequence of the functoriality of the cup product, we show that  the partition modules are stable for $\cechfull$ and $\ripsfull$ filtrations.

To begin with, let $d_{I}(M,N)$ denote the interleaving distance between two persistence 
modules $M$ and $N$~\cite{chazal2014persistence}. For finite point sets $X$ and $Y$ in $\mathbb{R}^d$, let $d_{H}(X,Y)$ denote the Hausdorff distance, and let $d_{GH}(X,Y)$ denote the Gromov-Hausdorff distance  between them.
Let $\Rips_{\bullet}(X)$ and $\Rips_{\bullet}(Y)$ denote the respective Rips filtrations of $X$ and $Y$, and let $\Cech_\bullet(X)$ and $\Cech_\bullet(Y)$ denote the respective $\cechfull$ filtrations of $X$ and $Y$.

\begin{theorem} \label{thm:ripsstable}
Let $\lambda_q = \{s_1,s_2,\dots,s_\ell\}$ be a partition of an integer $q$.
Then, for finite point sets $X$ and $Y$ in $\mathbb{R}^d$, the following identities hold true:
\begin{align*}
    \frac{1}{2}d_{\inter}(\im \Homol^{\lambda_q}(\smile \Rips_{\bullet}(X)),\im \Homol^{\lambda_q}(\smile \Rips_{\bullet}(Y))) &\leq d_{GH}(X,Y). \\ 
    \frac{1}{2}d_{\inter}(\im \Homol^{\lambda_q}(\smile \Cech_{\bullet}(X)),\im \Homol^{\lambda_q}(\smile \Cech_{\bullet}(Y))) &\leq  d_{H}(X,Y).
\end{align*}
\end{theorem}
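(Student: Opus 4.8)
The plan is to derive the stability of the partition modules from the well-known stability of the Rips (resp.\ \v{C}ech) persistence modules, transporting the interleaving through the cup product by means of functoriality. The central observation is that, for each partition $\lambda_q$, the partition module $\im \Homol^{\lambda_q}(\smile_{\bullet})$ is a \emph{subfunctor} of ordinary cohomology that is preserved by every induced map: any interleaving of the cohomology modules therefore restricts automatically to an interleaving of the partition modules with the same parameter. Thus the only genuinely new ingredient beyond classical geometric stability is the degree-respecting naturality of the cup product.

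First I would invoke the standard interleaving at the level of filtered simplicial complexes. Writing $\epsilon = d_{GH}(X,Y)$, the stability theorem for geometric complexes supplies, for each scale $t$, simplicial maps $f_t \colon \Rips_t(X) \to \Rips_{t+2\epsilon}(Y)$ and $g_t \colon \Rips_t(Y) \to \Rips_{t+2\epsilon}(X)$ (induced by honest vertex maps realizing a Gromov--Hausdorff correspondence of distortion $2\epsilon$) whose relevant compositions are contiguous to the canonical inclusions $\Rips_t(X) \hookrightarrow \Rips_{t+4\epsilon}(X)$ and $\Rips_t(Y) \hookrightarrow \Rips_{t+4\epsilon}(Y)$. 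Applying the contravariant functor $\Homol^{\ast}$ and using that contiguous maps induce equal maps on cohomology, we obtain a $2\epsilon$-interleaving of the $I^{\op}$-indexed cohomology modules $\Homol^{\ast}(\Rips_{\bullet}(X))$ and $\Homol^{\ast}(\Rips_{\bullet}(Y))$, whose interleaving maps are the induced maps $f_t^{\ast}$ and $g_t^{\ast}$ (the contravariance merely reverses arrows and is compatible with the $\op$-convention used throughout).

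The crucial step is to restrict this interleaving to the partition submodules. By the functoriality of the cup product, for any simplicial map $h$ the induced map $h^{\ast}$ satisfies $h^{\ast}([\alpha_1]\smile\cdots\smile[\alpha_\ell]) = h^{\ast}[\alpha_1]\smile\cdots\smile h^{\ast}[\alpha_\ell]$, and $h^{\ast}$ preserves cohomological degree. Hence $h^{\ast}$ carries a product of classes of degrees $s_1,\dots,s_\ell$ to a product of classes of exactly the same degrees, so $f_t^{\ast}$ and $g_t^{\ast}$ send $\im \Homol^{\lambda_q}$ of one space into $\im \Homol^{\lambda_q}$ of the other. Because the structure maps of a partition module are, by definition, the restrictions of the ambient cohomology structure maps, the two interleaving identities satisfied by $f_t^{\ast}, g_t^{\ast}$ on the full cohomology modules descend verbatim to their restrictions. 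This yields a $2\epsilon$-interleaving of $\im \Homol^{\lambda_q}(\smile \Rips_{\bullet}(X))$ and $\im \Homol^{\lambda_q}(\smile \Rips_{\bullet}(Y))$, so that $d_{\inter} \leq 2\epsilon$; taking the infimum over the isometric embeddings realizing $d_{GH}$ gives the first inequality. The \v{C}ech case is identical, using vertex maps that move points by at most $d_H(X,Y)$ to produce the analogous simplicial interleaving (equivalently, the nerve/union-of-balls interleaving) and hence the second inequality.

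The main obstacle I anticipate is the bookkeeping needed to make the transition from the simplicial level to the restricted submodules fully rigorous: one must choose genuine vertex maps realizing the correspondence so that $f_t, g_t$ are honestly simplicial and their compositions are contiguous to the inclusions, and then confirm that the submodule interleaving identities are \emph{inherited} rather than merely asserted. Both points are mild once functoriality is in hand, but it is precisely the combination of degree-preservation of $h^{\ast}$ with the product-respecting property that guarantees the finer partition module, and not only the coarser cup module, is carried along by the interleaving; this is the heart of the argument.
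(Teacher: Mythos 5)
Your proposal is correct and follows essentially the same route as the paper: both derive an interleaving of the filtrations from the Gromov--Hausdorff (resp.\ Hausdorff) bound via the standard stability results, apply the cohomology functor, and use functoriality (together with degree-preservation) of the cup product to restrict the interleaving maps and structure maps to the partition submodules. Your added care about vertex maps, contiguity, and the inheritance of the interleaving identities only makes explicit details the paper delegates to the cited stability lemmas of Chazal et al.\ and Cohen-Steiner et al.
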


\begin{proof}
    Let $X$ and $Y$ be  point sets in a common Euclidean space $\mathbb{R}^{d}$ such that
$d_{GH}(X,Y) = \frac{\epsilon}{2}$.
Then, in the proof of Lemma 4.3 of \cite{chazal2014persistence}, Chazal et al.  showed that $\Rips_{\bullet}(X)$ and $\Rips_{\bullet}(Y)$ are $\epsilon$-interleaved.  

\medskip

 \adjustbox{scale=0.75}{
\begin{tikzcd}
	\ldots & {\Rips_{a}(X)} && {\Rips_{a+\epsilon}(X)} && {\Rips_{a+2\epsilon}(X)} && \ldots \\
	\\
	\ldots & {\Rips_{a}(Y)} && {\Rips_{a+\epsilon}(Y)} && {\Rips_{a+2\epsilon}(Y)} && \ldots
	\arrow[from=3-2, to=1-4]
	\arrow[from=1-2, to=3-4]
	\arrow[from=1-4, to=3-6]
	\arrow[from=3-4, to=1-6]
	\arrow[from=3-6, to=1-8]
	\arrow[from=1-6, to=3-8]
	\arrow[hook, from=1-2, to=1-4]
	\arrow[hook, from=1-4, to=1-6]
	\arrow[hook, from=3-2, to=3-4]
	\arrow[hook, from=3-4, to=3-6]
	\arrow[from=1-1, to=1-2]
	\arrow[from=3-1, to=3-2]
	\arrow[from=3-6, to=3-8]
	\arrow[from=1-6, to=1-8]
\end{tikzcd}
}

\medskip

Applying the cohomology functor, we obtain an $\epsilon$-interleaving of the respective cohomology persistence modules.
Let $\{\cohommap_{a',a}\}_{a',a\in \mathbb{R}}$ and $\{\cohommaptwo_{a',a}\}_{a',a\in \mathbb{R}}$ denote the structure maps for the modules $\Homol^{\ast}(\Rips_{\bullet}(X))$ and $\Homol^{\ast}(\Rips_{\bullet}(Y))$, respectively.  Also, let $F_{a+\epsilon}: \Homol^{\ast}(\Rips_{a+\epsilon}(X)) \to \Homol^{\ast}(\Rips_{a}(Y))$ and $G_{a+\epsilon}: \Homol^{\ast}(\Rips_{a+\epsilon}(Y)) \to \Homol^{\ast}(\Rips_{a}(X))$ for all $a\in \mathbb{R}$ be the maps that assemble to give an $\epsilon$-interleaving between $\Homol^{\ast}(\Rips_{\bullet}(X))$ and $\Homol^{\ast}(\Rips_{\bullet}(Y))$.

\medskip
 \adjustbox{scale=0.75}{
\begin{tikzcd}
	\ldots & { \Homol^{\ast}( \Rips_{a}(X))} && { \Homol^{\ast}( \Rips_{a+\epsilon}(X))} && { \Homol^{\ast}( \Rips_{a+2\epsilon}(X))} && \ldots \\
	\\
	\ldots & {\Homol^{\ast}( \Rips_{a}(Y))} && {\Homol^{\ast}( \Rips_{a+\epsilon}(Y))} && { \Homol^{\ast}( \Rips_{a+2\epsilon}(Y))} && \ldots
	\arrow["{F_{a+\epsilon}}"{pos=0.7}, tail reversed, no head, from=3-2, to=1-4]
	\arrow["{G_{a+\epsilon}}"'{pos=0.7}, tail reversed, no head, from=1-2, to=3-4]
	\arrow["{F_{a+2\epsilon}}"{pos=0.7}, tail reversed, no head, from=3-4, to=1-6]
	\arrow["{F_{a+3\epsilon}}"{pos=0.7}, tail reversed, no head, from=3-6, to=1-8]
	\arrow["{G_{a+3\epsilon}}"'{pos=0.7}, tail reversed, no head, from=1-6, to=3-8]
	\arrow["{\cohommap_{a+\epsilon,a}}", tail reversed, no head, from=1-2, to=1-4]
	\arrow["{\cohommap_{a+2\epsilon,a+\epsilon}}", tail reversed, no head, from=1-4, to=1-6]
	\arrow["{\cohommaptwo_{a+\epsilon,a}}"', tail reversed, no head, from=3-2, to=3-4]
	\arrow["{\cohommaptwo_{a+2\epsilon,a+\epsilon}}"', tail reversed, no head, from=3-4, to=3-6]
	\arrow[tail reversed, no head, from=1-1, to=1-2]
	\arrow[tail reversed, no head, from=3-1, to=3-2]
	\arrow[tail reversed, no head, from=3-6, to=3-8]
	\arrow[tail reversed, no head, from=1-6, to=1-8]
	\arrow["{G_{a+2\epsilon}}"'{pos=0.7}, tail reversed, no head, from=1-4, to=3-6]
\end{tikzcd}
}
\medskip

For every  $j\in[\ell]$, let $[\alpha_j] \in \Homol^{s_j}(\complex_i)$. Then, by the functoriality of the cup product,
$\cohommap_{a+\epsilon,a}([\alpha_1]\smile [\alpha_2]\smile \dots \smile[\alpha_\ell] ) = \cohommap_{a+\epsilon,a}([\alpha_1])\smile \cohommap_{a+\epsilon,a}([\alpha_2]) \smile \dots \smile \cohommap_{a+\epsilon,a}([\alpha_\ell])$, and hence for all $a\in \mathbb{R}$, $\cohommap_{a+\epsilon,a}$
restricts to a map $\im \Homol^{\lambda_q}(\smile \Rips_{a+\epsilon}(X)) \to \im \Homol^{\lambda_q}(\smile \Rips_{a}(X))$.

The functoriality of the cup product also gives us the restrictions $\cohommaptwo_{a+\epsilon,a}: \im \Homol^{\lambda_q}(\smile \Rips_{a+\epsilon}(Y)) \to \im \Homol^{\lambda_q}(\smile \Rips_{a}(Y))$, $F_{a+\epsilon}: \im \Homol^{\lambda_q}(\smile \Rips_{a+\epsilon}(X)) \to \im \Homol^{\lambda_q}(\smile \Rips_{a}(Y))$ and $G_{a+\epsilon}: \im \Homol^{\lambda_q}(\smile \Rips_{a+\epsilon}(Y)) \to \im \Homol^{\lambda_q}(\smile \Rips_{a}(X))$.
It is easy to check that the 
 restrictions of the maps  $\{F_{a+\epsilon}\}_{a\in \mathbb{R}}$ and  $\{G_{a+\epsilon}\}_{a\in \mathbb{R}}$ assemble to give an 
$\epsilon$-interleaving between the persistence modules $\im \Homol^{\lambda_q}(\smile \Rips_{\bullet}(X))$ and $\im \Homol^{\lambda_q}(\smile \Rips_{\bullet}(Y))$ with the restrictions of ${\{\cohommap_{a,a'}\}}_{a,a'\in \mathbb{R}}$ and ${\{\cohommaptwo_{a,a'}\}}_{a,a'\in \mathbb{R}}$ as the structure maps for $\im \Homol^{\lambda_q}(\smile \Rips_{\bullet}(X))$ and $\im \Homol^{\lambda_q}(\smile \Rips_{\bullet}(Y))$, respectively.

\medskip
 \adjustbox{scale=0.75}{
\begin{tikzcd}
\ldots & {\im \Homol^{\lambda_q}(\smile \Rips_{a}(X))} && {\im \Homol^{\lambda_q}(\smile \Rips_{a+\epsilon}(X))} && {\im \Homol^{\lambda_q}(\smile \Rips_{a+2\epsilon}(X))} && \ldots \\
	\\
	\ldots & {\im \Homol^{\lambda_q}(\smile \Rips_{a}(Y))} && {\im \Homol^{\lambda_q}(\smile \Rips_{a+\epsilon}(Y))} && {\im \Homol^{\lambda_q}(\smile \Rips_{a+2\epsilon}(Y))} && \ldots
	\arrow["{F_{a+\epsilon}}"{pos=0.7}, tail reversed, no head, from=3-2, to=1-4]
	\arrow["{G_{a+\epsilon}}"'{pos=0.7}, tail reversed, no head, from=1-2, to=3-4]
	\arrow["{F_{a+2\epsilon}}"{pos=0.7}, tail reversed, no head, from=3-4, to=1-6]
	\arrow["{F_{a+3\epsilon}}"{pos=0.7}, tail reversed, no head, from=3-6, to=1-8]
	\arrow["{G_{a+3\epsilon}}"'{pos=0.7}, tail reversed, no head, from=1-6, to=3-8]
	\arrow["{\cohommap_{a+\epsilon,a}}", tail reversed, no head, from=1-2, to=1-4]
	\arrow["{\cohommap_{a+2\epsilon,a+\epsilon}}", tail reversed, no head, from=1-4, to=1-6]
	\arrow["{\cohommaptwo_{a+\epsilon,a}}"', tail reversed, no head, from=3-2, to=3-4]
	\arrow["{\cohommaptwo_{a+2\epsilon,a+\epsilon}}"', tail reversed, no head, from=3-4, to=3-6]
	\arrow[tail reversed, no head, from=1-1, to=1-2]
	\arrow[tail reversed, no head, from=3-1, to=3-2]
	\arrow[tail reversed, no head, from=3-6, to=3-8]
	\arrow[tail reversed, no head, from=1-6, to=1-8]
	\arrow["{G_{a+2\epsilon}}"'{pos=0.7}, tail reversed, no head, from=1-4, to=3-6]
\end{tikzcd}}

The above diagram, proves the first claim.

Cohen-Steiner et al.~\cite{CEH07} showed that if $d_H(X,Y) = \frac{\epsilon}{2}$, then there exists an $\epsilon$-interleaving between $\Cech_{\bullet}(X)$ and $\Cech_{\bullet}(Y)$. Using this fact and repeating the argument above, we obtain the following the second claim.
\end{proof}

Thus, if the Gromov-Hausdorff distance between point sets $X$ and $Y$ is small, then the interleaving distance for the respective ordinary persistence modules, cup modules and partition modules of cup product are all small. 

\section{Relative cup modules} \label{sec:relcup}

Let $(\complex,\altcomplex)$ be a simplical pair.
As in the case of absolute cohomology, for the relative cup product, we have bilinear maps    \[\smile:\Chain^p(\complex,\altcomplex)\times\Chain^q(\complex,\altcomplex)\to\Chain^{p+q}(\complex,\altcomplex)\text{ \,\, that assemble to give a linear map \,\,}\] \[
     \smile:\Chain^*(\complex,\altcomplex)\otimes\Chain^*(\complex,\altcomplex)\to\Chain^{*}(\complex,\altcomplex).
\] 
Also, we have bilinear maps  
 \[\smile:\Homol^p(\complex,\altcomplex)\times\Homol^q(\complex,\altcomplex)\to\Homol^{p+q}(\complex,\altcomplex) \text{ \,\, that assemble to give a linear map \,\,}\]
 \[    \smile:\Homol^*(\complex,\altcomplex)\otimes\Homol^*(\complex,\altcomplex)\to\Homol^{*}(\complex,\altcomplex).
\] 

For a filtered complex $\complex$, its persistent relative cohomology is given by $\Homol^{\ast}(\complex,\complex_\bullet)$ with linear maps given by inclusions~\cite{de2011dualities}. Written in our convention for intervals, every finite bar $(d,b]$ in  $B(\Homol^{i}(\complex_\bullet))$, we have a corresponding finite bar $(d,b]$ in $B(\Homol^{i+1}(\complex,\complex_\bullet))$, and for every infinite bar $(d,n]$ in $B(\Homol^{i}(\complex_\bullet))$, we have an infinite bar $(-1,d]$ in $B(\Homol^{i}(\complex,\complex_\bullet))$.

\paragraph*{Defining relative cup modules.}

Consider the following homomorphism given by cup products:
 \begin{equation} \label{eqn:tensormaptworel}
     \smile_{\bullet}:\Chain^*(\complex,\complex_{\bullet})\otimes\Chain^*(\complex,\complex_{\bullet})\to\Chain^{*}(\complex,\complex_{\bullet}).
 \end{equation}
Taking $G_{\bullet}=\smile_{\bullet}$ in the definition of
image persistence, we get
a persistence module, denoted by $\im \rel \Homol^{*}(\smile \complex_{\bullet})$, which is called the \emph{persistent relative  cup module}. Whenever the underlying filtered complex is clear from the context, we  use the shorthand notation   $\im \rel \Homol^{*}(\cupmap)$ instead of $\im \Homol^{*}(\smile \complex_{\bullet})$. 

 \paragraph*{Defining relative $k$-cup modules.}
 Consider  image persistence of the map
 \begin{equation} \label{eqn:tensormapfourrel}
     \cupmap^{k}:\Chain^{*}(\complex,\complex_{\bullet})\otimes\Chain^{*}(\complex,\complex_{\bullet}) \otimes \dots \otimes\Chain^{*}(\complex,\complex_{\bullet}) \to\Chain^{*}(\complex,\complex_{\bullet})
 \end{equation}
  where the tensor product is taken $k$ times. 
 Taking $G_{\bullet} \,\,= \,\,\cupmap^k$ in the definition of
image persistence, we get
the \emph{persistent relative $k$-cup module module} $\im \rel \Homol^{*}(\cupmap^k)$.

\subsection{Lack of duality}

In contrast to ordinary persistence, the following examples highlight the fact that the barcodes of persistent (absolute) cup modules differ from barcodes of persistent relative cup modules. In fact, through \Cref{ex:flip,ex:flop}, we  observe that there does not exist a bijection between corresponding intervals. 

\begin{example} \label{ex:flip}

Let $\complex$ be a torus with a disk removed.  A torus can be obtained by identifying the opposite sides of a $[-1,1]^2$ square. The space $\complex$ can be obtained by removing a circle of radius $1$ around the origin. We now give the following CW structure to $\complex$: Let $x_0$ and $x_1$ be the $0$-cells, $p$, $q$, $r$ and $s$ be the $1$-cells and $\alpha$ be the $2$-cell. $p$ and $q$ are loops around $x_0$, $r$ joins $x_0$ and $x_1$, and $s$ is a loop around $x_1$. The attachment of the $2$-cell $\alpha$ is given by the word $p q p^{-1} q^{-1} r s r ^{-1}$. See  \Cref{fig:torus-disk-rem} for an illustration. 

\begin{figure}[htbp]
\centerline{\includegraphics[scale=0.7]{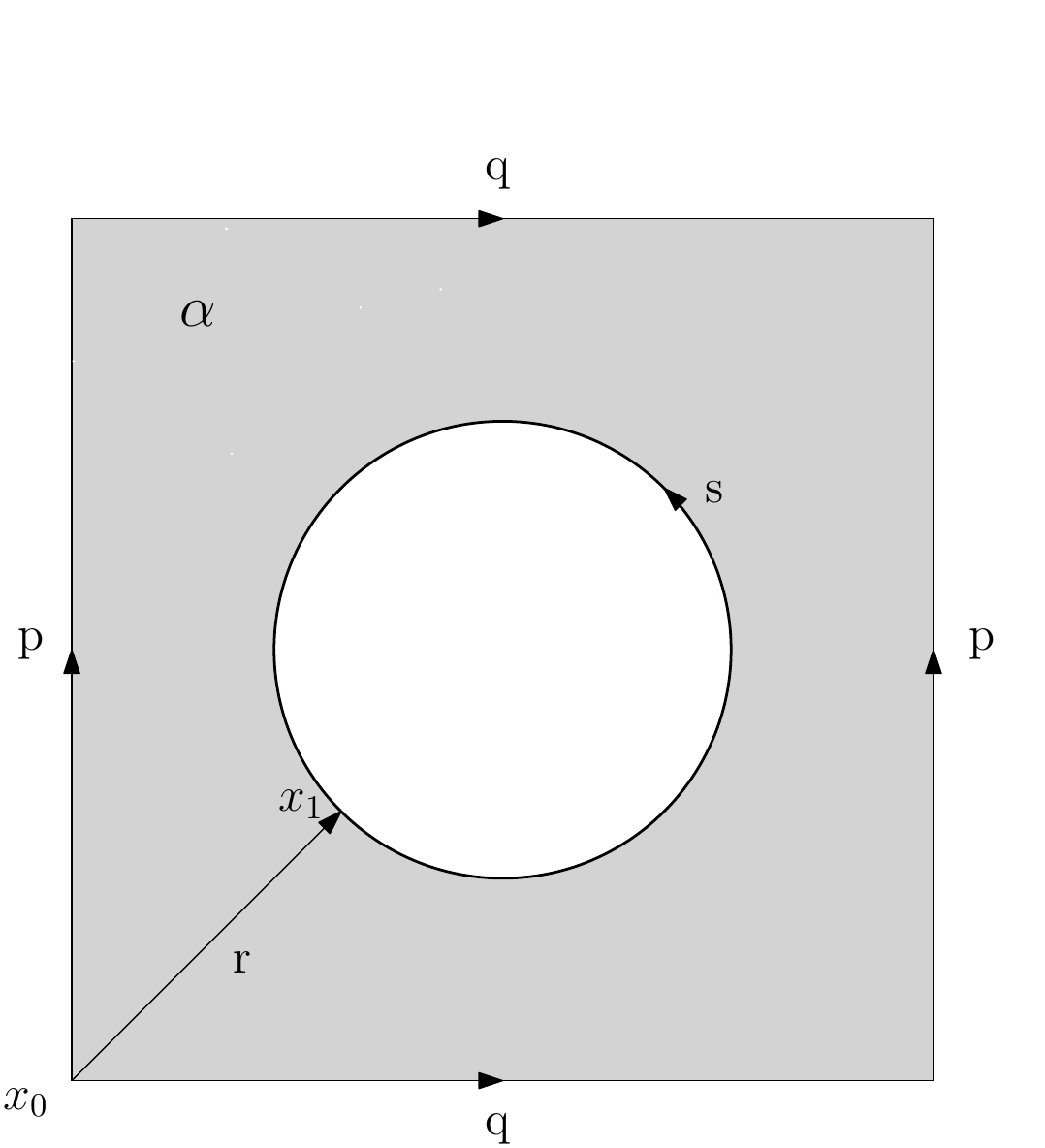}}
\caption{Complex $\complex$ is a torus with a disk removed.}
\label{fig:torus-disk-rem}
\end{figure}

Consider the cellular filtration $\complex_\bullet$ on $\complex$:
\begin{align*}
\complex_{0} & =\left\{ x_{0},x_{1}\right\}, \\
\complex_{1} & =\complex_{0}\cup\left\{ s,r\right\}, \\
\complex_{2} & =\complex_{1}\cup\left\{ p,q\right\}, \\
\complex_{3} & =\complex_{2}\cup\left\{ \alpha\right\}. 
\end{align*}

It is easy to check that the persistent (absolute) cup module for $\complex_{\bullet}$ is trivial. However, since $\complex_3 / \complex_1$ is a torus, the persistent relative cup module is nontrivial.

\end{example}

\begin{example} \label{ex:flop}

Let $\altcomplex'$ be a torus realized as a CW complex with a $0$-cell $x$, two $1$-cells $a$ and $b$ and a $2$-cell $\beta$. We now add a $2$-cell $\alpha$ to $\altcomplex'$ to obtain a CW complex $\altcomplex = \altcomplex' \cup \{\alpha\}$. See \Cref{fig:torus-with-disk} for an illustration. 

\todo{Figure spacing}

\begin{figure}[htbp]
\centerline{\includegraphics[scale=1.5]{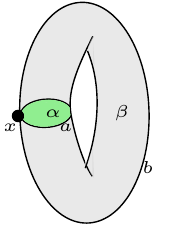}}
\caption{Complex $\altcomplex$ is a torus with a disk added.}
\label{fig:torus-with-disk}
\end{figure}

Now consider the following cellular filtration $\altcomplex_\bullet$ on $\altcomplex$: 
\begin{align*}
\altcomplex_{0} & =\left\{ x\right\} \\
\altcomplex_{1} & =\altcomplex_{0}\bigcup\left\{ a,b\right\} \\
\altcomplex_{2} & =\altcomplex_{1}\bigcup\left\{ \beta\right\} \\
\altcomplex_{3} & =\altcomplex_{2}\bigcup\left\{ \alpha\right\} 
\end{align*}

For the filtration $\altcomplex_\bullet$, the persistent (absolute) cup module is nontrivial since $\altcomplex_2$ is a torus. On the other hand, it is easy to check that the persistent relative cup module is trivial.
\end{example}

    \subsection{Algorithm: relative cup modules} \label{sec:algrel}

   In \Cref{sec:alg,sec:alg-kproduct}, we devised algorithms to compute (absolute) persistent $k$-cup modules.  The algorithms for computing \emph{relative} persistent $k$-cup modules are minor variations.
First, we describe how to compute the barcode of 
$\productrelmodulek$, which being an image module is a submodule
of  $\mathsf{H}^\ast(\complex,\mathsf{K}_\bullet)$. The vector space $\im \rel \mathsf{H}^\ast(\cupmapi)$ is a subspace of the  vector space 
$\mathsf{H}^\ast(\complex, \mathsf{K}_i)$. Let us call this subspace  
the \emph{relative cup space} of $\mathsf{H}^\ast(\complex, \mathsf{K}_i)$.
 {\sc RelCupPers} describes this algorithm to compute relative cup modules. First, in Step 0, we
compute the barcode of the cohomology persistence module $\mathsf{H}^\ast(\complex, \mathsf{K}_\bullet)$
along with a relative persistent cohomology basis.
This can be achieved in $O(n^3)$ time by applying the standard algorithm on the anti-transpose of the boundary matrix~\cite[Section 3.4]{de2011dualities}.
The basis $H$ is maintained with the matrix $\matrixh$  whose columns are representative cocycles. The matrix $\matrixh$ is initialized with the empty matrix. $\partial^{\perp}$ maintains the relative coboundaries as one processes the matrix in the reverse filtration order. At index $n$,  $\partial^{\perp}$ is empty.  Throughout, $ \partial^{\perp}$ is stored in the leftmost $n$ columns of $\matrixb$, and there are no other columns in $\matrixb$ at index $n$. Subsequently, nontrivial relative cocycle vectors are added to $\matrixb$. The classes of the nontrivial cocycles in matrix $\matrixb$ form a basis $\basisB$ for the relative cup space at any point in the course of the algorithm. 
In Step 2, at each index $k$, the $k$-th column of $\partial^{\perp}$ is populated with the coboundary of $k$. The remainder of the birth case and the whole of the death case is handled exactly like {\sc RelCupPers}. The correctness and complexity proofs for {\sc RelCupPers} are identical to  {\sc CupPers}.

\noindent
{\bf Algorithm} {\sc RelCupPers} ($\mathsf{K}_\bullet$)\label{algorithm:relcuppers}
\begin{itemize}
    \item Step 0. Compute barcode $\barcode =\{(d_i,b_i]\}$ of $\Homol^\ast(\complex,\complex_\bullet)$
    with representative cocycles $\xi_i$  
  \item Step 1. Initialize an $n \times n$ coboundary matrix $\partial^{\perp}$ as the zero matrix; $\partial^{\perp}$ is maintained as a submatrix of $\matrixb$; Initially all columns in  $\matrixb$ come from columns in $\partial^{\perp}$. Subsequently, in the course of the algorithm, new columns are added  to (and removed from) the right of $\partial^{\perp}$ in $\matrixb$ and the entries of $\partial^{\perp}$ are also modified; Initialize $\matrixh$ with the empty matrix
    \item Step 2. For $k:=n$ to $1$ do
    \begin{itemize}
        \item For every simplex $\sigma_j$ that has $\sigma_k$ as a face, set $\partial^{\perp}_{j,k}=1$
        \item Step 2.1 For every $i$ with  $k=b_i$ ($k$ is a birth-index)  and $\deg(\xi_i)>0 $ 
        \begin{itemize}
            \item Step 2.1.1 Update $\matrixh:=[\matrixh \mid \xi_i]$
            \item Step 2.1.2 For every $\xi_j \in \matrixh$ %, $\xi_j\not = \xi_i$,
            \begin{enumerate}
                \item [i.] If $(\zeta\leftarrow \xi_i\smile \xi_j)\not = 0$ and $\zeta$ is
                independent in $\matrixb$, then $\matrixb:=[\matrixb \mid \zeta]$ with column $\zeta$ annotated
                as $\zeta\cdot {\rm birth}:=k$ and  $\zeta\cdot {\rm rep@birth}:= \zeta$
            \end{enumerate}
        \end{itemize}
        \item Step 2.2 If $k=d_i$ ($k$ is a death-index) for some $i$ and $\deg(\xi_i)>0 $ then 
        \begin{itemize}
         \item Step 2.2.1 Reduce $\matrixb$ with left-to-right column additions 
         \item Step 2.2.2 If a nontrivial cocycle $\zeta$ is zeroed out, remove $\zeta$ from $\matrixb$, generate the
         bar-representative pair $\{(k,\zeta\cdot {\rm birth}],\zeta\cdot {\rm rep@birth}\}$
         \item Step 2.2.3 Update $\matrixh$ by removing the column $\xi_i$        
    \end{itemize}
\end{itemize}

\end{itemize}

In  Algorithm {\sc RelOrderkCupPers},  The initialization and maintenance of the matrix  $\matrixb$ and $\partial^\perp$ is the same as  for  {\sc RelCupPers}. The matrices $\matrixh$ and $\matrixr$ are intialized with empty matrices. The remainder of the birth case and the whole of the death case are identical to {\sc OrderkCupPers}. The correctness and complexity proofs for {\sc RelOrderkCupPers} are identical to  {\sc OrderkCupPers}.

\medskip

\noindent
{\bf Algorithm} {\sc RelOrderkCupPers} ($\mathsf{K}_\bullet$,$k$)\label{algorithm:relorderkcuppers}
\begin{itemize}
    \item Step 0. If $k=2$, return the barcode with representatives 
    $\{(d_{i,2},b_{i,2}], \xi_{i,2}\}$ 
    computed by {\sc CupPers} on
    $\mathsf{K}_\bullet$,
    else $\{(d_{i,k-1},b_{i,k-1}],\xi_{i,k-1}\}$ $\leftarrow$ {\sc RelOrderkCupPers}($\mathsf{K}_\bullet$, $k-1$)
    \item Step 1. Initialize an $n \times n$ coboundary matrix $\partial^{\perp}$ as the zero matrix; $\partial^{\perp}$ is maintained as a submatrix of $\matrixb$; Initially all columns in  $\matrixb$ come from columns in $\partial^{\perp}$. Subsequently, in the course of the algorithm, new columns are added  to (and removed from) the right of $\partial^{\perp}$ in $\matrixb$ and the entries of $\partial^{\perp}$ are also modified; Initialize $\matrixh$ and $\matrixr$ with empty matrices 
     \item Step 2. For $\ell:=n$ to $1$ do
    \begin{itemize}
       \item For every simplex $\sigma_j$ that has $\sigma_k$ as a face, set $\partial^{\perp}_{j,k}=1$
        \item Step 2.1 For  every $r$ s.t. $b_{r,1} = \ell \neq n$ (i.e., $\ell$ is a birth-index) and $\deg(\xi_{r,1})>0 $  
        \begin{itemize}
            \item Step 2.1.1 Update $\matrixh:=[\matrixh \mid \xi_{r,1}]$ 
            \item Step 2.1.2 For every  $\xi_{j,k-1} \in \matrixr$ %, $\xi_j\not = \xi_i$,
            \begin{enumerate}
                \item [i.] If $(\zeta\leftarrow \xi_{r,1}\smile \xi_{j,k-1})\not = 0$ and $\zeta$ is
                independent in $\matrixb$, then $\matrixb:=[\matrixb \mid \zeta]$ with column $\zeta$ annotated
                as $\zeta\cdot {\rm birth}:=\ell$ and $\zeta\cdot {\rm rep@birth}:= \zeta$
            \end{enumerate}
        \end{itemize}

        \item Step 2.2 For all $s$ such that  $\ell=b_{s,k-1}$      
        \begin{itemize}
            \item Step 2.2.1 If $\ell\not = n$, update  $\matrixr:=[\matrixr\mid \xi_{s,k-1}]$
            \item Step 2.2.2 For every  $\xi_{i,1} \in \matrixh$ %, $\xi_j\not = \xi_i$,
            \begin{enumerate}
                \item [i.] If $(\zeta\leftarrow \xi_{s,k-1}\smile \xi_{i,1})\not = 0$ and $\zeta$ is
                independent in $\matrixb$, then $\matrixb:=[\matrixb \mid \zeta]$ with column $\zeta$ annotated
                as $\zeta\cdot {\rm birth}:=\ell$ and $\zeta\cdot {\rm rep@birth}:= \zeta$
            \end{enumerate}
        \end{itemize}        
        
        \item Step 2.3 If $\ell=d_{i,1}$ (i.e. $\ell$ is a death-index)  and $\deg(\xi_{i,1})>0 $  for some $i$ then 
        \begin{itemize}
         \item Step 2.3.1 Reduce $\matrixb$ with left-to-right column additions 
         \item Step 2.3.2 If a nontrivial cocycle $\zeta$ is zeroed out, remove $\zeta$ from $\matrixb$, generate the
         bar-representative pair $\{(\ell,\zeta\cdot {\rm birth}],\zeta\cdot {\rm rep@birth}\}$
         \item Step 2.3.3 Remove the column $\xi_{i,1}$ from $\matrixh$ 
         \item Step 2.3.4 Remove the column $\xi_{j,k-1}$ from $\matrixr$ if $d_{j,k-1}=\ell$ for some $j$
        \end{itemize}
    \end{itemize}
\end{itemize}

\section{Conclusion}
The cup product, the Massey products and the Steenrod operations are cohomology operations that gives the  cohomology vector
spaces the structure of a graded ring~\cite{MR1867354,mosher2008cohomology,massey1998higher,steenrod}.
Recently, Lupo et al.~\cite{lupo2018persistence} introduced invariants called Steenrod barcodes and devised algorithms for their computation, which were implemented in the software \texttt{steenroder}. 
 Our work complements the results in Lupo et al.~\cite{lupo2018persistence}, Contessoto et al.\cite{contessoto,cuparxivtwo} and M\'{e}moli et al.~\cite{memoli2}. We believe that the combined advantages of a fast algorithm and favorable stability properties make  cup modules and partition modules valuable additions to the topological data analysis pipeline.

We note that although the commonly used algorithms for computing ordinary persistence are worst case $O(n^3)$ time, in practice, on most datasets they run in nearly linear time~\cite{phat,bauer2022keeping,boissonnat2013compressed}. Likewise, although the theoretical complexity bound for the algorithms presented in this work is $O(dn^4)$ time, if implemented, we expect them to run in nearly cubic or even quadratic time on most datasets. 

Finally, it would be remiss not to mention the recent application of persistent cup products for quasi-periodicity detection in sliding window embedding of time-series data~\cite[Section 4.5]{luispolanco}. In fact, using extensive experimentation, in  \cite{luispolanco}, Polanco shows that cup product information leads to improved quasi-periodicity detection as compared to using only ordinary persistence as in~\cite{tralieperea}. However, in \cite{luispolanco} only the cup product of cocycle representatives of the two longest intervals in dimension $1$ barcode is used. A more recent work~\cite{gakharperea} uses the persistent K{\"u}nneth formula for quasiperiodicity detection. It is conceivable that cup modules could improve on the quasi-periodicity detection methods described in \cite{tralieperea,gakharperea,luispolanco}.

\bibliography{cupprod}

\appendix

\newpage

\section{Computing persistent cup-length} \label{sec:better}
This section expands  Section~\ref{sec:originalbetter}.
The \emph{cup length} of a ring is defined as the maximum number of multiplicands that together give a nonzero product in the ring.
Let $\Int_{\ast} $ denote the set of all closed  intervals of $\mathbb{R}$, and let $\Int_{\circ}$ denote the set of all the open-closed intervals of $\mathbb{R}$ of the form $(a,b]$.
%Following \cite{contessoto}, the intervals of $\Int_{\ast}$ are written with angular brackets $\langle -, -\rangle$. 
Let $\mathcal{F}$ be an $\mathbb{R}$-indexed filtration of simplicial complexes. 
The \emph{persistent cup-length function} $\cupprod_{\bullet}:\Int_{\ast}\to\NBB$  (introduced in \cite{contessoto,cuparxivtwo})
is defined as the function 
from the set of closed intervals to the set of non-negative integers.\footnote{For simplicity and without loss of generality, we define persistent cup-length only for intervals in $\Int_{\ast}$, and persistent cup-length diagram only for intervals in $\Int_{\circ}$.} Specifically, it assigns to each interval $[ a,b]$, the cup-length of the image ring  $\im\big(\Homol^*(\complex)[ a,b]\big)$, which is the ring $\im\big(\Homol^*(\complex_b)\to\Homol^*(\complex_a)\big)$. %when $[ a,b]$ is a closed interval. 

Let the restriction of a cocycle $\xi$ to index $k$ be $\xi^{k}$.
We say that a cocycle $\zeta$ is defined at $p$ if there exists a cocycle $\xi$ in $\complex_{q}$ for $q\geq p$ and $\zeta = \xi^{p}$.

For a persistent cohomology basis $\PCB$, we say that $[d,b)$ is  \emph{a supported interval of length $k$ for $\PCB$} if there exists cocycles $\xi_1,\dots,\xi_k \in \PCB$ such that the  product cocycle $\eta^s = \xi_1^s \smile \dots \smile \xi_k^s$ is nontrivial for every $s \in [d,b)$ and $\eta^s$ either does not exist or is trivial outside of $[d,b)$. In this case, we say that $[d,b)$ \emph{is supported by} $\{\xi_1,\dots,\xi_k\}$.
The max-length of a supported interval $[d,b)$, denoted by $\ell_{\Omega}([d,b))$, is defined as 
\[\ell_{\Omega}([d,b)) = \max \{k \in \mathbb{N} \mid \exists \xi_1,\dots,\xi_k \in \PCB  \text{ such that } (d,b] \text{ is supported by } \{\xi_1,\dots,\xi_k\} \}. \]

Let $\Int_{\PCB}$ be the set of supported intervals of $\PCB$.
In order to compute the persistent cup-length function,
Contessoto et al.~\cite{contessoto}  define a notion called the \emph{persistent cup-length diagram}, which is a function  $\mathbf{dgm}^{\smile}_{\PCB}: \Int_{\circ} \to \NBB$,  that assigns to every interval $[d,b)$ in $\Int_{\Omega} \subset \Int_{\circ}$ its max-length $\ell_{\Omega}([d,b))$, and assigns zero to every interval in $\Int_{\circ} \setminus \Int_{\Omega} $. 

It is worth noting that unlike the order-$k$ product persistence modules, the persistent cup-length diagram is not a topological invariant as it depends on the choice of representative cocycles. While the persistent cup-length diagram is not useful on its own, in Contessoto et al.~\cite{contessoto}, it serves as an intermediate in computing the persistent cup-length (a stable topological invariant) due to the  following theorem.

\begin{theorem}[Contessoto et al.~\cite{contessoto}]
\label{thm:length_diagram}
Let $\mathcal{F}$ be a filtered simplicial complex, and let $\PCB$ be a persistent cohomology basis for $\mathcal{F}$.
The persistent cup-length function $\cupprod_{\bullet}$ can be retrieved from the persistent cup-length diagram $\mathbf{dgm}_{\PCB}^{\smile}$ for any $( a,b]\in \Int_{\circ}$ as follows.
\begin{equation}\label{eq:tropical_mobuis}
    \cupprod_{\bullet}([ a,b])=  \max_{(c,d]\supset [ a,b]}\mathbf{dgm}_{\PCB}^{\smile}((c,d]). 
\end{equation}
\end{theorem}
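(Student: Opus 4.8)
The plan is to establish the claimed identity by proving the two inequalities
\[
\cupprod_{\bullet}([a,b]) \;\leq\; \max_{(c,d]\supseteq [a,b]}\mathbf{dgm}_{\PCB}^{\smile}((c,d])
\qquad\text{and}\qquad
\cupprod_{\bullet}([a,b]) \;\geq\; \max_{(c,d]\supseteq [a,b]}\mathbf{dgm}_{\PCB}^{\smile}((c,d]).
\]
The starting observation is a basis description of the image ring. Since $\PCB$ is a persistent cohomology basis, the classes $\{[\xi_i^a]\}$ over intervals alive at $a$ form a basis of $\Homol^{\ast}(\complex_a)$, and the image ring $\im\big(\Homol^{\ast}(\complex_b)\to\Homol^{\ast}(\complex_a)\big)$ is spanned (as a vector space) by exactly those $[\xi_i^a]$ whose interval $(d_i,b_i]$ satisfies $d_i<a$ and $b\leq b_i$, i.e. $(d_i,b_i]\supseteq [a,b]$. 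Throughout I restrict attention to positive-degree basis cocycles, matching both the cup-length convention and the $\deg(\xi_i)>0$ bookkeeping of the algorithms.

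For the first inequality, suppose $\cupprod_{\bullet}([a,b])=k$, realized by positive-degree image-ring elements whose product is nonzero in $\Homol^\ast(\complex_a)$. Writing each factor in the basis above and expanding by bilinearity of $\smile$, the product becomes a sum of terms $[\xi_{i_1}^a]\smile\dots\smile[\xi_{i_k}^a]$ with each interval containing $[a,b]$; since the total sum is nonzero, at least one such term is nonzero. Set $\eta=\xi_{i_1}\smile\dots\smile\xi_{i_k}$, which is a genuine cocycle at each index $s\leq d:=\min_j b_{i_j}$, and note $b\leq d$ because every interval contains $[a,b]$. By functoriality of the cup product (as in \Cref{prop:mainstruct}), $\eta^s=\cochainmap_{s',s}(\eta^{s'})$ for $s\leq s'\leq d$, so $[\eta^s]\neq 0$ forces $[\eta^{s'}]\neq 0$: nontriviality propagates toward higher indices. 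Hence $\{s\leq d:[\eta^s]\neq 0\}$ is an interval $(c,d]$, which is precisely a supported interval for $\{\xi_{i_1},\dots,\xi_{i_k}\}$ with $(c,d]\supseteq[a,b]$ (as $[\eta^a]\neq 0$ gives $a>c$). Therefore $\mathbf{dgm}_{\PCB}^{\smile}((c,d])=\ell_{\PCB}((c,d])\geq k$, establishing the first inequality.

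For the reverse inequality, let $(c,d]\supseteq[a,b]$ attain the maximum $k=\mathbf{dgm}_{\PCB}^{\smile}((c,d])=\ell_{\PCB}((c,d])$, supported by cocycles $\xi_1,\dots,\xi_k\in\PCB$, so that $\eta^s=\xi_1^s\smile\dots\smile\xi_k^s$ is nontrivial for every $s\in(c,d]$. Since $[a,b]\subseteq(c,d]$ we have $[\eta^a]\neq 0$, and because a zero factor would kill the product, each $[\xi_j^a]\neq 0$, i.e. $d_j<a$; moreover $b\leq d\leq b_j$ gives $(d_j,b_j]\supseteq[a,b]$, so every $[\xi_j^a]$ lies in the image ring. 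Their product equals $[\eta^a]\neq 0$, exhibiting $k$ positive-degree image-ring elements with nonzero product, whence $\cupprod_{\bullet}([a,b])\geq k$. Combining the two inequalities yields \Cref{eq:tropical_mobuis}.

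The main obstacle is the first inequality, specifically the passage from an arbitrary nonzero product of image-ring elements to a single nonzero product of \emph{basis} cocycles and the verification that its support is an interval of the required containment type. The expansion step must carefully avoid spurious cancellation (handled by choosing a nonzero summand) and the support-is-an-interval claim relies essentially on functoriality to propagate nontriviality monotonically in the index; the remaining index comparisons are routine once the containment $b\leq\min_j b_{i_j}$ is extracted from interval containment.
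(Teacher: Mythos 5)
Your proof is correct, but note that there is no in-paper argument to compare it against: Theorem~\ref{thm:length_diagram} is imported from Contessoto et al.~\cite{contessoto} and stated without proof here, so your write-up stands on its own (and is in the spirit of the original argument, which likewise rests on expanding image-ring elements in a persistent cohomology basis). Both inequalities check out. For the upper bound on $\cupprod_{\bullet}([a,b])$, your two key moves are exactly right: the image ring $\im\big(\Homol^{\ast}(\complex_b)\to\Homol^{\ast}(\complex_a)\big)$ is spanned by the classes $[\xi_i^a]$ with $(d_i,b_i]\supseteq[a,b]$ because the basis is compatible with the structure maps, and a nonzero multilinear expansion must contain a nonzero monomial $[\xi_{i_1}^a]\smile\cdots\smile[\xi_{i_k}^a]$. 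The crux is then your observation that functoriality makes the support of $[\eta^{\bullet}]$ up-closed below $d=\min_j b_{i_j}$, so that it is an interval $(c,d]\supseteq[a,b]$ and hence a supported interval of length at least $k$; the paper's convention that $\xi_j^s$ is the zero cocycle for $s>b_j$ takes care of the ``does not exist or is trivial outside'' clause of the supported-interval definition, and the reverse inequality is routine, as you say. Two small points are worth making explicit. First, the positive-degree restriction you flag at the outset is genuinely necessary, not merely bookkeeping: the paper's definitions of cup length and of supported intervals do not state it, and without excluding degree-zero factors one could pad a product with the unit class, inflating the length of a supported interval and breaking the identity, so you are right to impose the convention. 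Second, your second half tacitly assumes the maximizing interval satisfies $\mathbf{dgm}_{\PCB}^{\smile}((c,d])\ge 1$; when the maximum is $0$ the inequality $\cupprod_{\bullet}([a,b])\ge 0$ is vacuous, so nothing is lost, but a sentence acknowledging this case would make the proof airtight.
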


Given a $P$-indexed filtration $\mathcal{F}$, let $V^k_\bullet$ denote its
persistent $k$-cup module. The following result appears as Proposition 5.9 in \cite{cuparxivtwo}. We provide a short proof in our notation for the sake of completeness.

\begin{proposition}[Contessoto et al.~\cite{cuparxivtwo}] \label{prop:easyprop}
 $\cupprod_{\bullet}([a,b])={\rm argmax}\{k \mid \rk_{V^k_\bullet}([a,b])\not = 0\}$.   
\end{proposition}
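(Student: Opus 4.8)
The plan is to reduce the statement to a single structural fact: the restriction $\cohommap_{b,a}\colon\Homol^\ast(\complex_b)\to\Homol^\ast(\complex_a)$ is a graded ring homomorphism, which is precisely the functoriality of the cup product. Writing $\phi:=\cohommap_{b,a}$ and $R:=\im\phi$, the subring $R$ is exactly the image ring $\im\big(\Homol^\ast(\complex_b)\to\Homol^\ast(\complex_a)\big)$ whose cup length defines $\cupprod_\bullet([a,b])$. First I would record, by the same reasoning as in \Cref{rem:defmap}, that the structure map of $V^k_\bullet$ over $[a,b]$ is the restriction $\phi|_{V^k_b}$ of $\phi$ to $V^k_b=\im\Homol^\ast(\smile^k_b)$; consequently $\rk_{V^k_\bullet}([a,b])\neq 0$ if and only if $\phi(V^k_b)\neq 0$.

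The heart of the argument is to identify $\phi(V^k_b)$. Since $V^k_b$ is spanned by $k$-fold products $[\alpha_1]\smile\dots\smile[\alpha_k]$ of positive-degree classes of $\complex_b$, functoriality gives $\phi([\alpha_1]\smile\dots\smile[\alpha_k])=\phi[\alpha_1]\smile\dots\smile\phi[\alpha_k]$, and because $\phi$ preserves degree each factor $\phi[\alpha_i]$ is a positive-degree element of $R$. Thus $\phi(V^k_b)$ equals the span of $k$-fold products of positive-degree classes of $R$, so $\phi(V^k_b)\neq 0$ exactly when $R$ carries a nonzero product of $k$ positive-degree classes, i.e.\ when $\cupprod(R)\geq k$.

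Putting these together, I would obtain for every $k$ the equivalence $\rk_{V^k_\bullet}([a,b])\neq 0\iff\cupprod(R)\geq k$. Hence $\{k\mid\rk_{V^k_\bullet}([a,b])\neq 0\}$ is the initial segment $\{k\mid\cupprod(R)\geq k\}$, whose maximum is $\cupprod(R)=\cupprod_\bullet([a,b])$, proving the claim. Well-definedness of the maximum is clear: a nonzero $k$-fold product forces its length-$(k-1)$ prefix to be nonzero, and $V^1_\bullet$ is the positive-degree cohomology module, which handles the case $\cupprod(R)=1$.

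The step I expect to be the main obstacle is the forward half of the identification of $\phi(V^k_b)$. A generic element of $V^k_b$ is a \emph{linear combination} $\sum_j[\alpha^{(j)}_1]\smile\dots\smile[\alpha^{(j)}_k]$, not a single product, so $\phi(V^k_b)\neq 0$ only guarantees that some such combination has nonzero image. The point that resolves this is that $\phi$, being a ring homomorphism, sends each summand to an honest $k$-fold product in $R$; a nonzero image then forces at least one summand to map to a nonzero $k$-fold product, which is exactly the witness needed for $\cupprod(R)\geq k$. Everything else is routine bookkeeping about degrees and the identification of the structure maps through \Cref{rem:defmap}.
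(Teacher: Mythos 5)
Your proof is correct and takes essentially the same route as the paper's: both reduce the claim, via functoriality of the cup product, to the equivalence $\rk_{V^k_\bullet}([a,b])\neq 0 \iff$ the image ring $\im\big(\Homol^*(\complex_b)\to\Homol^*(\complex_a)\big)$ carries a nonzero $k$-fold product of positive-degree classes, with the set of such $k$ forming an initial segment whose maximum is the cup length. If anything, you make explicit a point the paper's terse chain of equivalences glosses over, namely that a generic element of $V^k_b$ is a linear combination of $k$-fold products, and that a nonzero image of such a combination yields a single nonzero $k$-fold product in the image ring as the required witness.
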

\begin{proof}
    $\cupprod_{\bullet}([a,b])= k \quad\Longleftrightarrow \quad$ 
 \begin{inparaenum}
\item     There exists a set of cocycles $\{\xi_1,\dots,\xi_k\}$ that are defined at $b$ and 
    $\xi_1^s \smile \dots \smile 
 \xi_k^s$ is nontrivial for all $s\in [a,b]$ \item For any set of $k+1$ cocycles $\{\zeta_1,\dots,\zeta_{k+1}\}$ that are defined at $b$, the product $\zeta_1^s \smile \dots \smile 
 \zeta_{k+1}^s$ is zero for some $s \in [a,b]$. 
 \end{inparaenum}  $\quad\Longleftrightarrow\quad$ 
$\rk_{V^k_\bullet}([a,b])\not = 0$ and  $\rk_{V^{k+1}_\bullet}([a,b])  = 0$.
\end{proof}

Given a filtered complex $\complex_{\bullet}: \complex_{1} \hookrightarrow \complex_{2} \hookrightarrow \dots $,
Contessoto et al.~\cite{contessoto} define its $p$-truncation as the filtration $\complex_{\bullet}^p: \complex_{1}^p \hookrightarrow \complex_{2}^p \hookrightarrow \dots $, where for all $i$, $\complex_{i}^p$ denotes the $p$-skeleton of $\complex_{i}$. We now compare the complexities of computing the persistent cup-length using the algorithm described in Contessoto et al.~\cite{contessoto} against computing it with our approach. 

Assume that $\complex$ is a $d$-dimensional complex of size $n$, and let $n_p$ denote the number of simplices in the $p$-skeleton of $\complex$.
Let $\mathcal{F}$ be a filtration of $\complex$ and let $\mathcal{F}_p$ be the $p$-truncation of $\mathcal{F}$.
Then, according to Theorem~20 in Contessoto et al.~\cite{contessoto}, 
    using the persistent cup-length diagram,  
    \begin{inparaenum}
            \item the persistent cup-length   of $\mathcal{F}$ can be computed in $O(n^{d+2})$ time,
    \item the persistent cup-length  of $\mathcal{F}_p$ can be computed in $O(n_{p}^{p+2})$ time. 
    \end{inparaenum}

In contrast, as noted in \Cref{sec:alg},
%the order-$k$ persistence diagrams for  $k \in \{2,\dots,d\}$ of $\mathcal{F}$ can be computed in $O(d \, n^4)$ time, 
the barcodes of all the persistent $k$-cup modules for  $k \in \{2,\dots,p\}$ can be computed in $O(p \, n^4)$ time.  Note that $\rk_{V^k_\bullet}([a,b])\not = 0$ if and only if there exists an interval $(x,y]$ in $B(V^k_\bullet)$ such that $(x,y] \supset [a,b]$.  This suggests a simple algorithm to compute $\cupprod_{\bullet}$ from the barcodes of persistent $k$-cup modules for $k\in \{ 2, \dots, n \}$, that is, one finds the largest $k$ for which there exists an interval $(x,y] \in B(V^k_\bullet)$ such that $(x,y] \supset [a,b]$. Since the size of $B(V^k_\bullet)$, for every $k\in[n]$, is $O(n)$, the algorithm for extracting the persistent cup-length from the barcode of persistent $k$-cup modules for $k\in \{2,\dots, d\}$ runs in $O(n^2)$ time. 
Thus, using the algorithms described in \Cref{sec:alg-kproduct}, the persistent cup-length of a ($p$-truncated) filtration can be computed in $O(d n^4)$ ($O(p n^4)$) time, which is strictly better than the coarse bound for the algorithm in \cite{contessoto} for $d\geq 3$. 

%\section{Additional Example} \label{sec:examples}

\section{Correctness of {\sc OrderkCupPers} } \label{sec:orderk}

In this section, we provide a brief sketch of correctness of {\sc OrderkCupPers}. The statements of lemmas and their proofs are analogous to the case when $k=2$ treated in the main body of the paper. 

\begin{remark}\label{rem:defmaptwo}
%Let $\{\cohommap_i \colon \mathsf{H}^\ast(\mathsf{K}_i) \to \mathsf{H}^\ast(\mathsf{K}_{i-1}) \mid i \in [n] \}$ denote the structure map of the module $\ordmodule$.
Since $\productmodulekk$ is a submodule of $\ordmodule$, the
 structure maps  of $\productmodulekk$ for every $i \in I$, namely, $\im \Homol^{*}(\smile_{i}^k) \to \im \Homol^{*}(\smile_{i-1}^k)$  are given by restrictions of $\cohommap_i$ to $\im \Homol^{*}(\smile_{i}^k)$.
\end{remark}

\begin{definition}
    For any $i\in \{0,\dots,n\}$, a nontrivial cocycle  $\zeta \in \mathsf{Z}^{\ast}(\complex_i)$  is said to be an \emph{order-$k$ product cocycle of $\complex_i$} if $[\zeta] \in\im \Homol^{*}(\smile_{i}^k)$.
\end{definition}

\begin{proposition} \label{prop:mainstructk}
    For a filtration $\mathcal{F}$ of simplicial complex $\complex$, the birth points of $B(\productmodulekk)$ are a subset of the birth points of $B(\ordmodule)$, and the death points of $B(\productmodulekk)$ are a subset of the death points of $B(\ordmodule)$. 
\end{proposition}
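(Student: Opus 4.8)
The plan is to mirror the proof of \Cref{prop:mainstruct}, replacing pairwise products by $k$-fold products and invoking \Cref{rem:defmaptwo} in place of \Cref{rem:defmap}. The two claims (on births and on deaths) are handled separately, and both reduce to the observation that $\productmodulekk$ is a submodule of $\ordmodule$ whose structure maps are restrictions of the $\cohommap_i$.

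For the birth claim I would first record that $\im \Homol^{\ast}(\smile_\ell^k)$ is generated by the $k$-fold products $[\xi_{i_1}^\ell] \smile \cdots \smile [\xi_{i_k}^\ell]$ with $\xi_{i_1},\dots,\xi_{i_k} \in \Omega_\complex$. The key monotonicity is an iterated application of functoriality: for a fixed tuple of representatives, writing $\eta^\ell = \xi_{i_1}^\ell \smile \cdots \smile \xi_{i_k}^\ell$, functoriality of the cup product gives $\cochainmap_{\ell',\ell}(\eta^{\ell'}) = \eta^\ell$ for $\ell < \ell'$, so triviality of $\eta^{\ell'}$ forces triviality of $\eta^\ell$. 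Contrapositively, once $\eta$ is nontrivial it remains nontrivial at all larger indices at which it is defined. Since each factor $\xi_{i_t}$ is a cocycle only up to its own birth index $b_{i_t}$, the product $\eta$ is defined precisely for $\ell \le b_{\min}$ with $b_{\min}=\min_t b_{i_t}$, so it is born (first becomes nontrivial as the index is raised) at $b_{\min}$, which is a birth index of $\ordmodule$. Because the cup space at each index is generated by such products, any increase in $\dim \im \Homol^{\ast}(\smile_\ell^k)$, that is, any birth in $B(\productmodulekk)$, can occur only at an index at which some new representative appears, i.e.\ at a birth index of $B(\ordmodule)$. This proves the first claim.

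For the death claim I would argue exactly as in \Cref{prop:mainstruct}. Fix an index $j$ that is not a death index of $B(\ordmodule)$; then the structure map $\cohommap_{j+1}\colon \Homol^{\ast}(\complex_{j+1}) \to \Homol^{\ast}(\complex_{j})$ is injective. Choosing a basis $\Omega'_{j+1}$ of $\im \Homol^{\ast}(\smile_{j+1}^k)$ and extending it to a basis of $\Homol^{\ast}(\complex_{j+1})$, injectivity of $\cohommap_{j+1}$ keeps the images of $\Omega'_{j+1}$ nontrivial and linearly independent. By \Cref{rem:defmaptwo}, the structure maps of $\productmodulekk$ are the restrictions of $\cohommap_{j+1}$, so no class of the $k$-cup module dies at $j$; hence $j$ is not a death index of $B(\productmodulekk)$, giving the second claim.

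The only genuinely new point relative to the $k=2$ case is the monotonicity step for $k$-fold products, and this is the main thing to get right; it follows from iterating the two-variable functoriality and associativity of the cup product, so I expect no real obstacle beyond careful bookkeeping of the range $\ell \le b_{\min}$ on which the product is defined.
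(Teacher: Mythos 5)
Your proposal is correct and follows essentially the same route as the paper's own proof: the paper likewise generates $\im \Homol^{*}(\smile^k_\ell)$ by $k$-fold products of restricted representatives, orders the factors so that $b_{i_1}=\min_t b_{i_t}$ (your $b_{\min}$) and applies iterated functoriality of the cup product for the birth claim, and for the death claim uses exactly your basis-extension argument together with \Cref{rem:defmaptwo}. No substantive differences.
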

\begin{proof}
   Let $\{(d_{i_j},b_{i_j}] \mid j \in [k] \}$ be (not necessarily distinct) intervals in $B(\ordmodule)$, where $b_{i_{j+1}} \geq b_{i_{j}}$ for $j \in [k-1]$. Let $\xi_{i_j}$ be  a representative for $(d_{i_j},b_{i_j}]$ for $j \in [k]$.

    If $\xi_{i_1} \smile \xi_{i_2}^{b_{i_1}} \smile  \dots\smile\xi_{i_k}^{b_{i_1}}$ is trivial, then by the functoriality of cup product,
\begin{align*}
     \cochainmap_{b_{i_1},r}(\xi_{i_1} \smile \xi_{i_2}^{b_{i_1}} \smile  \dots\smile \xi_{i_k}^{b_{i_1}}) &= \cochainmap_{b_{i_1},r}(\xi_{i_1}) \smile \cochainmap_{b_{i_1},r}(\xi_{i_2}^{b_{i_1}}) \smile \dots \smile \cochainmap_{b_{i_1},r}(\xi_{i_k}^{b_{i_1}}) \\
     &= \xi_{i_1}^{r} \smile \xi_{i_2}^{r} \smile \dots \smile \xi_{i_k}^{r} 
\end{align*}

    is trivial  $\forall r < b_{i_1}$. Writing contrapositively, if  $\exists r < b_{i_1}$ for which $\xi_{i_1}^{r} \smile \xi_{i_2}^{r} \smile \dots \smile   \xi_{i_k}^{r}$ is nontrivial, then $\xi_{i_1} \smile \xi_{i_2}^{b_{i_1}} \smile  \dots\smile\xi_j^{b_{i_1}}$ is nontrivial.
    Noting that $\im \Homol^{*}(\smile_{\ell}^k)$ for any $\ell \in \{0,\dots,n\}$ is generated by $\{[\xi_{i_1}^\ell] \smile \{[\xi_{i_2}^\ell] \smile \dots \smile [\xi_{i_k}^
    \ell] \mid \xi_{i_j} 
\in \Omega_\complex \text{ for }j\in[k]\}$, it follows that $b$ is the birth point of an interval in $B(\productmodulekk)$ only if it is  the birth point of an interval in $B(\ordmodule)$, proving the first claim. 

Let $\Omega'_{j+1} = \{ [\tau_1],\dots, [\tau_\ell]\}$ be a basis for $\im \Homol^{*}(\smile_{j+1}^k)$. Then, $\Omega'_{j+1}$ extends to a basis $\Omega_{j+1}$ of $\Homol^{\ast}(\complex_{j+1})$.
If $j$ is not a death index in $B(\ordmodule)$, then $\cochainmap_{j+1}(\tau_1), \dots, \cochainmap_{j+1}(\tau_\ell)$ are all nontrivial and linearly independent. Using \Cref{rem:defmap}, it follows that $j$ is not a death index in $B(\productmodulekk)$, proving the second claim. 
 \end{proof}

\begin{corollary} \label{cor:onediek}
If $d$ is a death index in $B(\productmodulekk)$, then at most one bar of $B(\productmodulekk)$ has death index $d$.
\end{corollary}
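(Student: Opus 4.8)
The plan is to mirror the proof of \Cref{cor:onedie} almost verbatim, replacing the two-fold cup module by the $k$-fold cup module and invoking the $k$-fold analogues of the supporting facts. First I would recall from \Cref{rem:defmaptwo} that $\productmodulekk$ is a submodule of $\ordmodule$, so that its structure map at index $d$, namely $\im \Homol^{*}(\smile_{d+1}^{k}) \to \im \Homol^{*}(\smile_{d}^{k})$, is simply the restriction of the ordinary cohomology structure map $\cohommap_{d+1}$ to the subspace $\im \Homol^{*}(\smile_{d+1}^{k})$.

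Next, by \Cref{prop:mainstructk}, every death index of $B(\productmodulekk)$ is also a death index of $B(\ordmodule)$; hence it suffices to treat the case where $d$ is a death index of ordinary cohomology. Because $\complex_\bullet$ is simplex-wise, at such an index the map $\cohommap_{d+1}$ is surjective with a one-dimensional kernel, i.e. $\dim \Homol^{*}(\complex_d) = \dim \Homol^{*}(\complex_{d+1}) - 1$. The crux is then the same elementary linear-algebra fact already used for $k=2$: if $f\colon V_1 \to V_2$ has $\dim \ker f = 1$, then for any subspace $W_1 \subseteq V_1$ the restriction $f|_{W_1}$ satisfies $\dim \ker(f|_{W_1}) = \dim(\ker f \cap W_1) \leq 1$. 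Applying this with $f = \cohommap_{d+1}$ and $W_1 = \im \Homol^{*}(\smile_{d+1}^{k})$ yields the chain
\[
\dim(\im \Homol^{*}(\smile_{d}^{k})) + 1 \;\geq\; \dim(\im \Homol^{*}(\smile_{d+1}^{k})) \;\geq\; \dim(\im \Homol^{*}(\smile_{d}^{k})),
\]
so the dimension of the $k$-cup space drops by at most one across $d$.

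Finally I would translate this dimension bound into the statement about bars: in the $I^{\op}$-indexed barcode the number of bars with death index $d$ equals the dimension of the kernel of the structure map $\im \Homol^{*}(\smile_{d+1}^{k}) \to \im \Homol^{*}(\smile_{d}^{k})$, which we have just bounded by one. I do not anticipate a genuine obstacle, since the argument is a direct transcription of the $k=2$ case; the only point requiring care is bookkeeping the reverse ($I^{\op}$) indexing convention, so that \emph{death at $d$} is correctly identified with the one-dimensional kernel of $\cohommap_{d+1}$ rather than of $\cohommap_{d}$.
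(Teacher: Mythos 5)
Your proposal is correct and takes essentially the same route as the paper: the paper's proof of this corollary simply states that it is identical to that of \Cref{cor:onedie}, which is precisely the argument you give, namely that the structure map at a death index is the restriction (via \Cref{rem:defmaptwo}) of $\cohommap_{d+1}$, whose kernel is at most one-dimensional, so the kernel of the restricted map -- and hence the number of bars of $B(\productmodulekk)$ dying at $d$ -- is at most one. Your added bookkeeping (reduction via \Cref{prop:mainstructk} and the explicit identification of bars dying at $d$ with the kernel dimension under the $I^{\op}$ convention) is a faithful, slightly more detailed transcription of the same argument.
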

\begin{proof}
The proof is identical to \Cref{cor:onedie}.
\end{proof}

Let $C_{b}$ be the vector space of order-$k$ product cocycle classes created at index $b$.
We note that for a birth index $b \in \{0,\dots,n\}$,
$C_b$ is a subspace of $\Homol^{\ast}(\complex_b)$ which can be written as

\begin{equation} \label{eqn:collectkcocycles}
C_b = 
\begin{cases} \langle [\xi_{i_1}] \smile \dots \smile [\xi_{i_k}] \mid \xi_{i_j} \text{ for }j\in [k] \text{ are essential cocycles of } \Homol^\ast(\complex_\bullet)\rangle & \text{ if } b=n \\ 
 \langle [\xi_{i_1}] \smile \dots \smile  \ [\xi_{i_k}^b] \mid \xi_{i_1}\text{ is born at $b$}\,\, \& \,\,\xi_{i_j}\text{ for } j\neq 1 \text{ is  born at  index} \geq b \rangle & \text{ if } b<n
\end{cases} 
\end{equation}

For a birth index $b$, let $\matrixc_{b}$ be the submatrix of $\matrixb$ formed by representatives whose classes generate $C_b$, augmented to $\matrixb$ in Steps 2.1.2 (i) and 2.2.2 (i)  when $k=b$ in the outer \textbf{for} loop of Step 2.

\begin{theorem}
Algorithm {\sc OrderkCupPers} correctly computes the barcode of persistent $k$-cup modules.
\end{theorem}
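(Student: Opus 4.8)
The plan is to argue by induction on the recursion depth $k$, with the inner argument mirroring the proof of \Cref{thm:maincorrect} almost verbatim. The base case $k=2$ is exactly \Cref{thm:maincorrect}, which asserts that {\sc CupPers} (invoked in Step~1) returns the barcode of the persistent cup module together with a persistent cohomology basis for it. For the inductive step I would assume that the recursive call {\sc OrderkCupPers}$(\mathsf{K}_\bullet,k-1)$ returns the barcode of the $(k-1)$-cup module together with a valid set of representatives $\{\xi_{j,k-1}\}$ whose restricted classes form a basis for $\im \Homol^\ast(\smile_i^{k-1})$ at every index $i$, and whose births $b_{j,k-1}$ and deaths $d_{j,k-1}$ are correct. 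Write $V_\bullet = \productmodulekk$. By \Cref{prop:mainstructk}, every birth (death) index of $B(V_\bullet)$ is a birth (death) index of $B(\ordmodule)$, so it suffices to test only those indices; and by \Cref{cor:onediek} at most one order-$k$ class dies at any death index, which justifies removing a single zeroed-out column in Step~2.3.2.

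The first genuinely new point, relative to $k=2$, is to verify that the two product-generation steps together synthesize a generating set for the space $C_b$ of order-$k$ classes newly created at each birth index $b$, as specified in \Cref{eqn:collectkcocycles}. I would use the recursive decomposition $[\xi_{i_1}]\smile\cdots\smile[\xi_{i_k}] = \mu \smile [\xi_{i_k}]$, where $\mu$ is an order-$(k-1)$ class represented in $\matrixr$ and $[\xi_{i_k}]$ is an ordinary class represented in $\matrixh$. Such a product first enters the $k$-cup space at $b = \max(\text{birth of }\mu,\ \text{birth of }[\xi_{i_k}])$. Because Step~2.1 runs before Step~2.2, at index $\ell=b$ the matrix $\matrixr$ still holds only those $(k-1)$-representatives born strictly after $b$, so Step~2.1.2 generates exactly the products in which the ordinary factor is newly born and $\mu$ is strictly older; Step~2.2.1 then inserts the $(k-1)$-representatives born at $b$ into $\matrixr$, and Step~2.2.2 generates the products in which $\mu$ is newly born at $b$ and the ordinary factor has birth index $\geq b$ (including those just appended to $\matrixh$ in Step~2.1.1). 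Using commutativity of the cup product, these two families partition all products first appearing at $b$; hence, after the independence filtering against $\matrixb$, their classes span $C_b$, and the appended columns are precisely $\matrixc_b$.

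With $C_b$, $\matrixc_b$, $\matrixb_b$ and $\DP_b$ defined exactly as in the $k=2$ proof, the remainder transfers essentially verbatim. I would run the inner induction on the birth index $b$. For the base case $b=n$, the essential order-$k$ products form $\matrixc_n=\matrixb_n$, placed immediately to the right of the coboundary block $\partial^{\perp}$, so the left-to-right reductions in Step~2.3.1 zero them out at exactly the indices of $\DP_n$; by \Cref{fa:rankfact} together with $\rk_{V_\bullet}([d,n]) = \rk_{V_\bullet}([d+1,n])-1$ for $d\in\DP_n$, these are the correct death indices of the essential bars. For the inductive step, assuming $\DP_{b'}$ is the correct death-index set for all births $b'>b$, the same rank computation gives $\rk_{V_\bullet}([d,b]) = \rk_{V_\bullet}([d+1,b])-1$ for each $d\in\DP_b$, so some birth index $\geq b$ is paired with $d$; disjointness $\DP_b\cap\DP_{b'}=\emptyset$ and the inductive hypothesis force that index to be $b$. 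Since restrictions eventually trivialize every column of $\matrixc_b$, the cardinality of $\DP_b$ matches the number of bars born at $b$, and ordering-independence follows by the same inductive remark as before.

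The main obstacle I anticipate is not the rank machinery, which is inherited wholesale from \Cref{thm:maincorrect}, but the bookkeeping in the second paragraph: one must check that the interleaving of Steps~2.1 and~2.2 (degree-one factor new versus $(k-1)$-factor new, with the strict $>b$ versus $\geq b$ distinction) covers every order-$k$ product exactly once up to dependence, and that this hinges on the inductive hypothesis that $\matrixr$ faithfully carries a persistent basis of the $(k-1)$-cup module with correct births. Making the claim that $\matrixc_b$ generates $C_b$ fully rigorous, rather than merely asserting the analogy with $k=2$, is the step that demands the most care.
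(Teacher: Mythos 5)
Your proposal is correct and takes essentially the same approach as the paper: the paper's proof likewise reduces everything to the machinery of \Cref{thm:maincorrect} (via \Cref{prop:mainstructk}, \Cref{cor:onediek}, the rank argument with $\DP_b$, and the definition of $C_b$ in \Cref{eqn:collectkcocycles}), identifying the only genuinely new ingredient as the claim that the columns appended in Steps~2.1 and~2.2 generate $C_b$. The one difference is that the paper dismisses that generation claim as ``easy to check,'' whereas your second paragraph spells out the bookkeeping explicitly --- the ordering of Steps~2.1/2.2, the strict-versus-weak birth-index split, commutativity of the cup product, and the inductive hypothesis that $\matrixr$ carries a persistent basis of the $(k-1)$-cup module --- which is a faithful and more complete rendering of the same argument.
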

\begin{proof} The proof is nearly identical to \Cref{thm:maincorrect}. The key difference (from \Cref{thm:maincorrect}) is in how the submatrix $\matrixc_{b}$ of $\matrixb$ that stores the linearly independent order-$k$ product cocycles born at $\ell=b$ in Steps 2.1 and 2.2 is built. It is easy to check that the classes of the cocycle vectors in $\matrixc_{b}$ augmented to $\matrixb$ in Steps 2.1 and 2.2 generate the space $C_b$ described in \Cref{eqn:collectkcocycles}.
\end{proof}

\end{document}